%% file: main.tex
\pdfoutput=1
\documentclass[aps,prx,reprint,superscriptaddress,bibnotes,longbibliography,floatfix]{revtex4-2}
\usepackage{amsmath,amssymb,amsthm,graphicx,booktabs,array,longtable,xcolor,placeins}
\usepackage[colorlinks=true,allcolors=blue!45!black]{hyperref}
\newtheorem{theorem}{Theorem}

\newtheorem{lemma}[theorem]{Lemma}
\theoremstyle{definition}
\newcommand{\F}{\mathbb F_2}
\newcommand{\CSS}{\operatorname{CSS}}
\newcommand{\wt}{\operatorname{wt}}

\newcommand{\minover}[1]{\min_{\substack{#1}}}%
\newcommand{\eqindent}{\hspace*{0.8em}}%
\newcommand{\CNOT}{\mathrm{CNOT}}
\newcommand{\CCZ}{\mathrm{CCZ}}
\newcommand{\CS}{\mathrm{CS}}
\newcommand{\Qh}{Q_{37}}

\newcommand{\ket}[1]{|#1\rangle}
\newcommand{\code}[5]{[\![#1,#2,#3,#4,#5]\!]}
\newcommand{\qcode}[3]{[\![#1,#2,#3]\!]}
\newcommand{\Wt}{\mathrm{Wt}}%
\usepackage{totcount}
\newcommand{\Dpart}{d_{\mathrm{part}}}%
\newcommand{\tpart}{t_{\mathrm{part}}}%

\newtotcounter{ftcodes}\newtotcounter{pscodes}
\DeclareRobustCommand{\NFT}{\total{ftcodes}}
\DeclareRobustCommand{\NPS}{\total{pscodes}}
\DeclareRobustCommand{\NDB}{\ifnum\totvalue{ftcodes}<0 $??$\else\ifnum\totvalue{pscodes}<0 $??$\else\number\numexpr\totvalue{ftcodes}+\totvalue{pscodes}\relax\fi\fi}

\begin{document}
\title{Qubit CSS codes with Universal Transversal Gates}

\author{Arpit Dua}
\email{adua@quera.com}
\affiliation{QuEra Computing Inc., 1380 Soldiers Field Road, Boston, MA 02135, USA}

\author{Adam Holmes}
\affiliation{NVIDIA, Santa Clara, CA, USA}

\author{Michael J. Gullans}
\affiliation{QuEra Computing Inc., 1380 Soldiers Field Road, Boston, MA 02135, USA}

\author{Victor V. Albert}
\email{vva@umd.edu}
\affiliation{Joint Center for Quantum Information and Computer Science,
NIST/University of Maryland, College Park, MD 20742, USA}

\date{\today}
\begin{abstract}

The Eastin--Knill theorem rules out a universal set of transversal logical gates on any code that detects single-qubit errors. The theorem constrains one partition of the qubits at a time. We show that depth-one layers of few-qubit gates, each respecting its own partition, can together generate a universal logical gate set on an error-correcting code.
Each logical gate generator is then a single depth-one layer followed by error correction of the same stabilizer code, with no magic-state distillation, code switching, gauge fixing, lattice surgery, or other gadgets. 
We show that such codes can be found by applying an entangling circuit to a lower-distance ``seed'' code admitting a depth-one non-Clifford gate and an appropriate ``kernel'' code.
We construct \NFT{} codes with depth-one generating sets that tolerate at least one faulty gate per layer, including a $\qcode{37}{1}{7}$ and a $\qcode{59}{1}{9}$ code, both of check weight at most eight.

\end{abstract}
\maketitle

\section{Introduction}
\label{sec:intro}
A fault-tolerant quantum computer stores its data in quantum error-correcting codes and must compute without letting errors spread faster than they are corrected. Transversal gates, naively, offer a simple way to do this: each physical gate touches one qubit per code block, so a single fault creates at most one error per block~\cite{gottesman2010introduction}. If a universal gate set had this form on one code, the overhead cost of a computation would be similar to that of a fault-tolerant memory.
Unfortunately, the Eastin--Knill theorem rules out a universal set of transversal gates for qubit codes that detect single-qubit errors~\cite{eastin2009,zeng2011,chen2008subsystem}, and further limitations of such gates have been proven~\cite{jochymoconnor2018disjointness,chakraborty2026nogo,koh2026achieving,guyot2026addressability}.

As a result of these limitations, fault-tolerant compilers typically incorporate non-unitary gadgets requiring measurements or resets beyond those used for syndrome extraction of a fixed code~\cite{campbell2017roads}. 
Such gadgets include magic-state distillation, cultivation, or injection~\cite{bravyi2005,litinski2019distillation,gidney2024cultivation,li2015magic};
lattice surgery~\cite{horsman2012surface,cohen2022low};
code switching, code deformation, or gauge fixing~\cite{paetznick2013,anderson2014,bombin2015,butt2025measurement,veroni2025universal};
pieceable fault tolerance~\cite{yoder2016,yoder2017universal};
and flag-based gadgets~\cite{anker2025}.
Although their overhead has been studied extensively, magic-state factories can still take up a substantial fraction of an algorithm's spacetime volume~\cite{fowler2012surface,litinski2019game,gidney2021rsa,beverland2022}.
Magic-state cultivation has brought the cost of a $T$ state close to that of a lattice-surgery CNOT of equal reliability~\cite{gidney2024cultivation}, but it still requires post-selection and teleportation.
Other schemes avoid additional measurements, but use deeper non-transversal circuits that can include long-range qubit permutations~\cite{jochymoconnor2014,chamberland2016thresholds,jochymoconnor2019fault,lavasani2019universal,zhu2020universal}.

\begin{figure*}[t]
\begin{minipage}[b]{0.455\textwidth}
\vbox to 0pt{\hbox{\textbf{(a)}}\vss}\nointerlineskip%
\includegraphics[width=\linewidth]{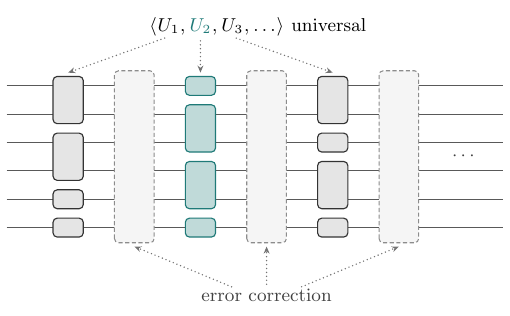}
\end{minipage}\hfill
\begin{minipage}[b]{0.52\textwidth}
\vbox to 0pt{\hbox{\textbf{(b)}}\vss}\nointerlineskip
\includegraphics[width=\linewidth]{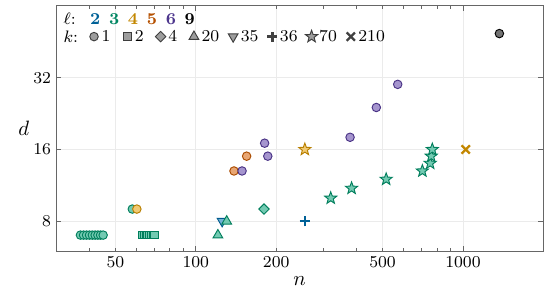}
\end{minipage}
\caption{(a) A logical gate compiled from depth-one layers, each made of disjoint gates on at most $\ell$ qubits of one partition of the qubits (horizontal lines), with error correction between layers. Gray layers are Clifford, and the teal layer $U_2$ is non-Clifford. The layers of any one partition generate a finite logical group whenever every part has fewer than $d$ qubits, but layers on several partitions can generate a universal one. (b) Distance against length for the \NFT{} codes of Table~\ref{tab:ft-codes}, whose $\ell$-fold transversal generators satisfy $\ell\le\lfloor(d-1)/2\rfloor$, so that at least one faulty gate per layer is correctable. The color gives $\ell$ and the symbol the number $k$ of logical qubits.}
\label{fig:concept}
\end{figure*}

The Eastin--Knill no-go result relies on a-priori fixing a qubit partition.
However, it does not constrain the logical group generated by few-body logical layers on \textit{different} partitions [Fig.~\ref{fig:concept}(a)].
Indeed, Eastin and Knill themselves remarked that a code with a universal set of encoded gates, each transversal in isolation, would be extremely useful~\cite{eastin2009}.
Recent works have explored this idea, constructing error-correcting codes that admit ``fold-transversal'' and related gates on different partitions to achieve the full logical Clifford group~\cite{moussa2016,breuckmann2024,gong2024computation,malcolm2025computing,berthusen2025concatenated,tansuwannont2026full,holmes2026qlc,davenport2026generalized,albert2026beyond,matsuura2026constant}.
Other works study qubit codes that admit specific non-Clifford gates in low depth and/or width (see Refs.~\cite{knill1996,bravyi2005,bombin2007topological,bravyi2012,bombin2015,kubica2015unfolding,haah2018,vasmer2019surface,koutsioumpas2022smallest,jain2024tcodes,lin2024transversal,zhu2025nonclifford,golowich2025asymptotically,nguyen2025good,wills2025constant,golowich2025quantum,he2025addressable,golowich2025near,breuckmann2026cups,menon2026magic,wills2026improved,golowich2026cup,wills2026classification}, to name a few).
We take this idea to its natural conclusion and construct codes that have the best of both worlds.

We construct codes admitting a universal set of gates generated by depth-one few-body layers, explain the origin of their non-Clifford generators, and simulate one of the smallest fault-tolerant examples under circuit-level noise.
Our results imply that the bulk of a fault-tolerant quantum computation is possible on a \textit{fixed} stabilizer code using only two independent gadgets interleaved with each other, namely, depth-one layers of few-qubit gates and error correction.

We call a depth-one layer of disjoint gates on at most $\ell$ qubits $\ell$-fold transversal.
The usual notion of a transversal layer is the $\ell=1$ case, where there is only one possible partition. The $\ell=2$ case is thus the minimal extension one has to make to evade the Eastin--Knill theorem because that allows us to have multiple partitions.
We construct \NDB{} error-correcting codes that admit a universal set of logical gates using $\ell$-fold transversal layers on different partitions with $\ell$ ranging from 2 to 10.

The ability to utilize different partitions comes at a price.
An error on $s$ qubits before an $\ell$-fold layer can spread to $\ell s$ qubits by the end of the layer.
One faulty gate per layer is correctable if the distance $d\ge2\ell+1$ (Sec.~\ref{sec:nonclifford}).
We show that \NFT{} of our codes meet this condition for every generator, Clifford and non-Clifford alike [Fig.~\ref{fig:concept}(b)].
The remaining \NPS{} codes are only guaranteed to detect a faulty gate, $d\ge\ell+1$, and need post-selection to be fault tolerant~\cite{knill2005quantum,aliferis2008accuracy,gottesman2016small}.
They serve as toy examples and include Hamming subcodes~\cite{qecdb,grassl2013leveraging,albert2026beyond}, the tesseract code~\cite{hastings2017small,haah2017magic,delfosse2020short,prabhu2024distance,reichardt2024,paetznick2026improved} (Fig.~\ref{fig:tesseract}), and the Carbon code~\cite{knill2005quantum,paetznick2024demonstration,paetznick2026improved}.
An \(\ell\)-fold transversal layer with \(\ell \geq d\) is always possible, e.g., using a minimum-weight logical Pauli as a rotation generator.

The depth-one layers of our ``depth-one universal'' codes consist of a set of generators for the full logical Clifford group, similar to codes in the previous works~\cite{moussa2016,malcolm2025computing,tansuwannont2026full,holmes2026qlc,davenport2026generalized,albert2026beyond}, together with a non-Clifford generator.
We trace the non-Clifford generator of each code back to a pair of shorter codes --- a ``seed'' code with a depth-one non-Clifford gate and an ancillary ``kernel'' code --- that are related to the universal code via an entangling CNOT circuit.
Our identified combinations of kernels, seeds, and CNOT circuits achieve three desiderata simultaneously: they enlarge the distance of the seed code, maintain a nontrivial logical non-Clifford action, and ensure that all generators, Clifford and non-Clifford, are \(\ell\)-fold transversal.

The CNOT circuit spreads the seed's non-Clifford gate over small groups of qubits, and the resulting gate acts on the parity of each group with its logical action unchanged.
We find that the seed sets two limits on this parity layer. 
First, the distance of the code is at most the size of the largest group times the $Z$ distance of the seed, the weight of its lightest $Z$-type logical operator (Sec.~\ref{sec:distance}). 
Second, the parity layer corrects no more faulty gates than that seed's $Z$ distance allows, irrespective of how much the CNOT circuit increases the distance of the seed (Sec.~\ref{sec:nonclifford}). 
A seed of distance three therefore gives a parity layer that corrects at most one faulty gate.

Sections~\ref{sec:tesseract} and~\ref{sec:carbon} show that the tesseract code and the Carbon code are depth-one universal, the latter as a first instance of the seed--kernel construction.
Section~\ref{sec:construction} presents the general construction, the exact distance formulas, the non-Clifford and Clifford layers with their partition distances, and the catalog of \NDB{} codes.
Section~\ref{sec:example} describes a $\qcode{37}{1}{7}$ code on the 15-qubit Reed--Muller seed, one of the smallest that correct a faulty native gate in every generating layer, with its gates, a distance-preserving extraction circuit, and its logical error rates under circuit-level noise.
Section~\ref{sec:conclusion} discusses the open problem of larger partition distances and the questions of architecture and resources that these codes raise.
Appendix~\ref{app:theorems} proves the results of Sec.~\ref{sec:construction}, Appendix~\ref{app:37} specifies the $\qcode{37}{1}{7}$ code and proves its properties, and Appendix~\ref{app:resources} gives its extraction circuit, the noise model and the simulations.

\section{Example: the tesseract is depth-one universal}\label{sec:tesseract}

The tesseract code is a CSS code on 16 qubits that encodes six logical qubits with distance four, i.e., a $\qcode{16}{6}{4}$ code, listed as \texttt{ml:e1c9cb94} in Table~\ref{tab:ps-codes}.
It is the smallest error-correcting member of a Reed--Muller family whose full logical Clifford group is generated by two-fold transversal gates~\cite{tansuwannont2026full}.
We show that the tesseract also admits a two-fold transversal non-Clifford generator.
\begin{figure}[!t]
\centering\includegraphics[width=\columnwidth]{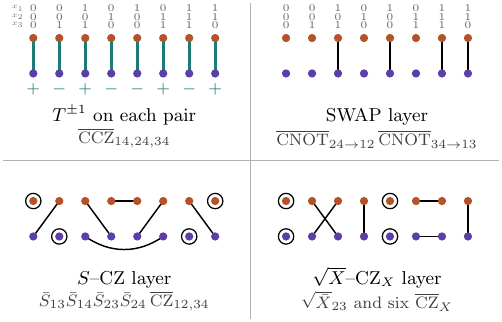}
\caption{
The four generators of a universal two-fold transversal gate set for the tesseract code.
The code's 16 qubits are arranged in two rows, with each corresponding to one of the two cubes of a tesseract and labeled by bit strings $x=x_1x_2x_3x_4$.
The non-Clifford layer is shown in the upper left, and the three Clifford generators are shown in the other panels.
In each panel, lines mark the support of two-qubit gates, and circles mark qubits that receive a single-qubit gate.
Physical and logical actions are described.
For example, the non-Clifford layer (teal) applies $T$ ($+$) or $T^\dagger$ ($-$) to the parity of each pair, yielding logical CCZ gates. 
The diagonal $S$--CZ layer applies $S$ and CZ gates; the $\sqrt{X}$--$\mathrm{CZ}_X$ layer applies $X$-basis versions of those gates.
}
\label{fig:tesseract}
\end{figure}
The code detects but cannot correct all two-qubit errors, so the depth-one layers are only fault tolerant with post-selection.
\begin{figure*}[t]
\centering\includegraphics[width=0.8\textwidth]{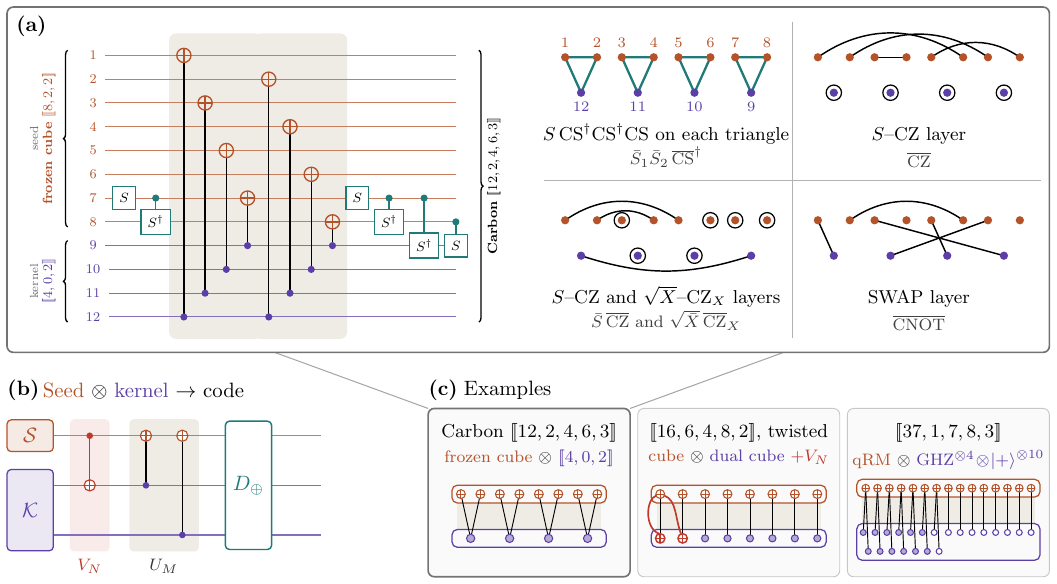}
\caption{Codes obtained by coupling a seed (orange) to a kernel (purple) with CNOTs (shaded). (a) The Carbon code of Sec.~\ref{sec:carbon}, opened from its card in (c). Left: its coupling circuit, with the seed's non-Clifford pair layer (teal) carried to the triangle $\{7,8,9\}$. Right: its depth-one layers on four partitions, drawn as in Fig.~\ref{fig:tesseract}, with the non-Clifford layer on the triangles (teal). (b) General framework: The CNOTs $U_M$ write each seed bit $s_i$ into the parity of the qubits of $F_i$, so a diagonal seed gate $D$ becomes the same gate $D_\oplus$ on these parities, for every kernel $\mathcal K$ and every twist $V_N$ (red), whose CNOTs run from the seed qubit into the kernel before $U_M$. (c) Examples, each named as seed $\otimes$ kernel. Each line is a coupling CNOT from a kernel qubit onto a seed qubit ($\oplus$). Open dots are kernel qubits in $\ket+$, filled dots the other kernel qubits, and red marks the CNOTs of a twist $V_N$. qRM denotes a quantum Reed--Muller seed; for the $\code{37}{1}{7}{8}{3}$ code, where $\Wt=8$ is the largest weight of its stabilizer checks and $\ell=3$ the width of its physical gates, the seed code is the $\qcode{15}{1}{3}$ code.}
\label{fig:coupling}
\end{figure*}

The qubits of the tesseract code are the 16 vertices $x=x_1x_2x_3x_4\in\F^4$ of a tesseract, the four-dimensional cube.
Figure~\ref{fig:tesseract} draws them as two rows of eight, the cube $x_4=0$ above the cube $x_4=1$, so that the eight joining pairs $\{x,x+e_4\}$, where $e_4=0001$ flips the last bit, connect the two rows.
The stabilizer group of this self-dual code is generated by ten weight-eight checks, supported on the eight vertices with $x_i=1$ for each $i$ and on the eight even-weight vertices,
\begin{equation}
 X(c),\ Z(c),\qquad c\in\Bigl\{\{x_i=1\}_{i=1}^{4},\ \bigl\{\textstyle\sum_ix_i=0\bigr\}\Bigr\}~.
 \label{eq:tesschecks}
\end{equation}
Here $X(c)=\prod_{q\in c}X_q$, and likewise for $Z(c)$.
The six logical qubits are labeled by the six two-element subsets $I\subset\{1,2,3,4\}$.
Every weight-four logical operator is supported on a plane $\{x,x+u,x+v,x+u+v\}$, where $u,v\in\F^4$ are distinct and nonzero.
Conversely, $X$ and $Z$ on each of the 140 planes are logical operators.
Among the planes are the 24 squares, on which two coordinates are fixed, and the operators on squares
\begin{align}
 \bar X_I&=X(\{x:x_i=0\text{ for }i\in I\}),\nonumber\\
 \bar Z_I&=Z(\{x:x_j=0\text{ for }j\notin I\})~,
 \label{eq:tesslogicals}
\end{align}
form a symplectic basis, i.e., $\bar X_I$ anticommutes with $\bar Z_J$ exactly when $I=J$.

Figure~\ref{fig:tesseract} draws the four generators of a universal gate set and their action on the tesseract's qubits.
Each generator is a two-fold transversal layer: it applies two-qubit gates to the disjoint pairs drawn as lines and single-qubit gates to the ringed qubits.
The four layers act on four different partitions of the qubits.
The three Clifford layers shown provide a minimal generating set for the full logical Clifford group $\mathrm{Sp}(12,\F)$ (modulo Paulis and phases).
We focus on the non-Clifford layer (teal, upper-left panel of Fig.~\ref{fig:tesseract}) and its origins.

A native gate of width $\ell$ is an arbitrary unitary on its $\ell$ qubits, applied as one operation, so a faulty native gate may apply an arbitrary error to its whole support.
In Clifford layers each gate may be any Clifford unitary on its qubits.
Two parity rotations on a set $L$ of qubits appear throughout this work,
\begin{equation}
 T(L)=e^{-i\pi Z(L)/8},\qquad
 \sqrt X(L)=e^{-i\pi X(L)/4}.
 \label{eq:gates}
\end{equation}
Up to a global phase, $T(L)$ applies $T$ to the parity of the qubits in $L$, which is non-Clifford, and $\sqrt X(L)$ applies $\sqrt X$ to their parity in the $X$ basis, which is Clifford.
The non-Clifford layer applies either the gate $T(L)$ or $T(L)^\dagger$ of Eq.~\eqref{eq:gates} to each joining pair $L=\{x,x+e_4\}$, depending on whether $x_1+x_2+x_3$ is even or odd, respectively.
Up to a global phase, $T(L)$ multiplies a computational basis state $\ket b$ by $\omega^{b_x\oplus b_{x+e_4}}$, where $\omega=e^{i\pi/4}$ and $b_x\in\{0,1\}$ is the bit on vertex $x$.
In other words, it applies a $T$ gate to the parity of the pair.
On the code space, the product of the eight parity-gate phases reduces to $(-1)^{\lambda_{14}\lambda_{24}\lambda_{34}}$ for a codeword in which logical qubit $I$ holds the bit $\lambda_I$.
The layer therefore implements exactly $\overline{\CCZ}_{14,24,34}$, with no Clifford remainder.

It is known that the tesseract arises from the $\qcode{8}{3}{2}$ color code on the three-dimensional cube~\cite{reichardt2024}, the smallest interesting color code~\cite{kubica2015unfolding,campbell2016smallest}, which we call the cube.
The stabilizer group of the cube is generated by $X$ on all eight vertices and by $Z$ on each of the six faces. Its $X$ and $Z$ distances, the smallest weights of its $X$-type and $Z$-type logical operators, are $(d_X,d_Z)=(4,2)$.

The tesseract is obtained from the cube by placing one copy of the cube on the vertices $x_4=0$, placing a second copy, with $X$ and $Z$ exchanged, on the vertices $x_4=1$, and applying eight CNOTs, each controlled by a lower vertex $x+e_4$ and targeting its upper partner $x$.
The coupling combines the two codes, each with three logical qubits, into the six-logical-qubit tesseract and raises the distance from two to four.

In the computational basis, each CNOT adds the bit of the lower vertex to the bit of the upper one.
The bit $s_x$ of the cube's vertex $x$ therefore becomes the parity $b_x\oplus b_{x+e_4}$ of the joining pair.
The cube's non-Clifford gate, $T$ on the four even-weight vertices and $T^\dagger$ on the four odd-weight vertices, implements $\overline{\CCZ}$ on its three logical qubits~\cite{campbell2016smallest}.
The CNOT circuit thereby spreads this gate's phase, $\omega^{\sum_x(-1)^{x_1+x_2+x_3}s_x}$, to the phase $\omega^{\sum_x(-1)^{x_1+x_2+x_3}(b_x\oplus b_{x+e_4})}$ of the pair layer above, and the logical action is unchanged.
This explains why the tesseract inherits a two-fold transversal non-Clifford gate from the cube.

This observation is an instance of the general picture behind the $\ell$-fold transversal non-Clifford gates of all codes constructed in this work.
A seed code, here the cube, is coupled to a kernel code, here another cube, by a CNOT circuit.
The circuit spreads the seed's non-Clifford gate to an $\ell$-fold transversal gate, with $\ell=2$ in this case.
Section~\ref{sec:construction} makes this picture exact.

\section{Carbon code via seed-kernel construction}\label{sec:carbon}

As a second example of the seed--kernel procedure sketched in Fig.~\ref{fig:coupling}(b), we show that the $\qcode{12}{2}{4}$ Carbon code~\cite{knill2005quantum,paetznick2024demonstration}, listed as \texttt{eczoo:9a9707b2} in Table~\ref{tab:ps-codes}, also admits a universal set of gates with depth-one generators.
Eight CNOTs couple a distance-two seed to a four-qubit kernel state.
The distance doubles to four, and the seed's two-fold non-Clifford layer becomes three-fold transversal.
Together with four two-fold Clifford layers~\cite{albert2026beyond} on three other partitions, it gives a universal gate set.

The twelve qubits of the Carbon code are arranged as four triangles in Fig.~\ref{fig:coupling}(a).
Its checks are
\begin{align}
 X:\ &X_1X_3X_5X_7,\quad X_2X_4X_6X_8,\quad X_1X_2X_7X_8X_9X_{12},\nonumber\\
 &X_1X_2X_5X_6X_{10}X_{12},\quad X_1X_2X_3X_4X_{11}X_{12},\nonumber\\
 Z:\ &Z_1Z_6Z_7Z_8Z_{10}Z_{12},\quad Z_2Z_5Z_6Z_7Z_9Z_{12},\nonumber\\
 &Z_3Z_5Z_6Z_8Z_9Z_{11},\quad Z_4Z_5Z_7Z_8Z_{10}Z_{11},\nonumber\\
 &Z_9Z_{10}Z_{11}Z_{12}~.
 \label{eq:carbonchecks}
\end{align}
From now on we use the extended notation
\begin{equation}
 \code{n}{k}{d}{\underbrace{\Wt}_{\substack{\text{check}\\\text{weight}}}}{\underbrace{\ell}_{\substack{\text{gate}\\\text{width}}}}~,
 \label{eq:fiveparams}
\end{equation}
which also records the check weight $\Wt$, the largest weight of a check in the stabilizer generating set we give, and the native gate width $\ell$, the largest number of qubits on which one native gate of the generating layers acts.
In this notation, the Carbon code is a $\code{12}{2}{4}{6}{3}$ code.

Three of the four right panels of Fig.~\ref{fig:coupling}(a) draw and describe a minimal generating set of four two-fold transversal layers for the logical Clifford group
$\mathrm{Sp}(4,\F)$ modulo Paulis.
The fourth upper left panel, with teal lines, draws the code's non-Clifford layer,
\begin{equation}
 \prod_{\{a,b,c\}}S_a\,\CS^\dagger_{ab}\,\CS^\dagger_{ac}\,\CS_{bc}~,
 \label{eq:carbontri}
\end{equation}
where $\{a,b,c\}$ runs over the four triangles $\{1,2,12\}$, $\{3,4,11\}$, $\{5,6,10\}$, and $\{7,8,9\}$, with $c$ the kernel qubit, and $\CS_{pq}$ is the controlled-$S$ gate on qubits $p$ and $q$.
This is a depth-one layer of four disjoint diagonal three-qubit gates, one per triangle.
In the logical basis fixed below, this layer acts on the code space as $\bar S_1\bar S_2\,\overline{\CS}^\dagger_{12}$, a logical controlled-$S$ up to logical Clifford gates.

The seed code is the ``frozen cube'' $\qcode{8}{2}{2}$ on qubits $1,\ldots,8$.
It is the cube code $\qcode{8}{3}{2}$ of Sec.~\ref{sec:tesseract}, with its qubits relabeled and with one logical qubit fixed by adding the logical operator $X_1X_3X_5X_7$ to the stabilizer group.
Its stabilizers are
\begin{equation}
 \begin{array}{@{}l@{\qquad}l@{\qquad}l@{}}
  X_1X_3X_5X_7, & Z_1Z_6Z_7Z_8, & Z_2Z_5Z_6Z_7,\\
  X_2X_4X_6X_8, & Z_3Z_5Z_6Z_8, & Z_4Z_5Z_7Z_8,
 \end{array}
 \label{eq:frozencube}
\end{equation}
and its distances are $(d_X,d_Z)=(4,2)$.
Its qubits come in the four pairs $\{1,2\}$, $\{3,4\}$, $\{5,6\}$, $\{7,8\}$, which are the supports of its two-fold non-Clifford layer below and become the triangles of the Carbon code.
As logical operators we take $\bar X_1=X_1X_2X_5X_8$, $\bar X_2=X_1X_3X_4X_8$, $\bar Z_1=Z_2Z_3Z_5Z_8$, and $\bar Z_2=Z_1Z_2Z_3Z_4$, and we write $(\lambda_1,\lambda_2)$ for the bits held by the two logical qubits.

The kernel on qubits $9,\ldots,12$ is a $\qcode{4}{0}{2}$ stabilizer state in this case, so it encodes nothing.
It is the uniform superposition of the even-weight strings of length four, with stabilizers $X_9X_{12}$, $X_{10}X_{12}$, $X_{11}X_{12}$, and $Z_9Z_{10}Z_{11}Z_{12}$.

The depth-two coupling circuit $U_M$ of Fig.~\ref{fig:coupling}(b) consists of eight CNOTs, drawn in the left panel of Fig.~\ref{fig:coupling}(a).
Each kernel qubit forms a triangle with the pair it controls: $\{1,2,12\}$, $\{3,4,11\}$, $\{5,6,10\}$, and $\{7,8,9\}$.
Conjugating the stabilizers by the CNOTs leaves the seed $X$ checks and $Z_9Z_{10}Z_{11}Z_{12}$ unchanged.
It copies each kernel $X$ check onto the pairs its qubits control, and it adds to each seed $Z$ check a $Z$ on every kernel qubit that controls an odd number of its qubits.
The result is exactly Eq.~\eqref{eq:carbonchecks}, and the frozen cube's logical operators $\bar X_1$, $\bar X_2$, $\bar Z_1$, and $\bar Z_2$ remain logical operators of the coupled code, $\bar Z_1$ up to the kernel stabilizer $Z_9Z_{10}Z_{11}Z_{12}$.

The $X$ distance of the seed is retained.
The $Z$ distance doubles to four, because every weight-two $Z$ logical of the seed is copied onto the kernel qubits of the two triangles it touches, as in $Z_1Z_3\mapsto Z_1Z_3Z_{11}Z_{12}$.
The kernel does not supply any logicals since it is a state.

The frozen cube carries a depth-one pair layer that applies $S$ to the first member $a$ of each pair $\{a,b\}$ and $\CS^\dagger$ across the pair; one such pair, $\{7,8\}$, is shown in the circuit diagram of Fig.~\ref{fig:coupling}(a).
On a computational basis state with seed bits $s_1,\ldots,s_8$, the pair $\{a,b\}$ contributes the phase $i^{s_a-s_as_b}$, with $s_a$ from $S_a$ and $-s_as_b$ from $\CS^\dagger_{ab}$.
The basis strings of a codeword form a coset of the $X$ stabilizers, and the total phase is constant on every such coset, so the layer preserves the code.
It multiplies a codeword with logical bits $(\lambda_1,\lambda_2)$ by $i^{\lambda_1+\lambda_2-\lambda_1\lambda_2}$, which is $\bar S_1\bar S_2$ times a controlled-$\bar S^\dagger$.

The non-Clifford layer of Eq.~\eqref{eq:carbontri} is obtained by promoting a non-Clifford pair layer of the frozen cube to a layer of three-qubit gates.
The CNOTs rewrite the seed bits in terms of the physical bits of the Carbon code, i.e., the bits of a computational basis state after the coupling circuit.
In Sec.~\ref{sec:seedkernel}, each seed qubit $i$ together with the kernel qubits that control CNOTs onto it forms a set $F_i$, whose parity carries the seed bit $s_i$.
Here each kernel qubit controls CNOTs onto two seed qubits, so it lies in the sets $F_i$ of both, for example $F_1=\{1,12\}$ and $F_2=\{2,12\}$.
Each triangle is the union of two such sets and carries two seed bits.
On a computational basis state, the triangle $\{a,b,c\}$ with seed bits $s_a,s_b$ and kernel bit $u$ holds the physical bits $(x_a,x_b,x_c)=(s_a\oplus u,\ s_b\oplus u,\ u)$ after the CNOTs.
Its three edge parities are $x_a\oplus x_c=s_a$, $x_b\oplus x_c=s_b$, and $x_a\oplus x_b=s_a\oplus s_b$, and the kernel bit $u$ is absent in all of them.
A gate on the triangle whose phase depends only on these parities therefore acts on the seed bits alone, independently of the state of the kernel.

We can write the phase of the pair gate in terms of the physical bits of its triangle.
Substituting $s_a=x_a\oplus x_c$ and $s_b=x_b\oplus x_c$ gives
\begin{equation}
 s_a-s_as_b= x_a-x_ax_b-x_ax_c+x_bx_c
 \label{eq:carbonphase}
\end{equation}
as an identity between integers.
The right-hand side is the phase exponent of $S_a\,\CS^\dagger_{ab}\,\CS^\dagger_{ac}\,\CS_{bc}$: the $S$ gate contributes $i^{x_a}$, and each controlled-$S^{\pm1}$ gate contributes $i^{\pm x_px_q}$ on its pair $\{p,q\}$.
Applied to all four triangles, these gates form the depth-one layer of Eq.~\eqref{eq:carbontri}.
By Eq.~\eqref{eq:carbonphase}, its phase on every computational basis state equals the phase of the seed layer on the seed bits.
The layer therefore preserves the code and implements $\bar S_1\bar S_2\,\overline{\CS}^\dagger_{12}$, the same logical gate as the seed layer, now with width three.

\section{Seed-kernel construction of depth-1 universal codes}\label{sec:construction}
The tesseract and the Carbon code are two instances of one distance-increasing construction.
A small code, the seed, has a non-Clifford logical gate that is diagonal in the computational basis, e.g., the cube or frozen cube code for the two respective instances.
CNOTs from the qubits of a second code, the kernel, onto the seed qubits spread each seed qubit $i$ over a set $F_i$ of physical qubits, so that the parity of $F_i$ equals the seed bit $s_i$ on every encoded state.
A diagonal gate multiplies each basis state by a phase that depends only on its bits, so the seed gate keeps its logical action when its phase is evaluated on the parities of the sets $F_i$ instead of on the seed bits.
A single-qubit gate on seed qubit $i$ then acts on the parity of $F_i$, and a gate on several seed qubits acts on the parities of their sets.
A layer of disjoint diagonal gates on the seed thus becomes a layer of the same gates on the parities, of depth one when the sets of different gates share no qubit.

Gates diagonal in the $X$ basis, such as $\sqrt X$ rotations, transform in a different way under the circuit.
In the $X$ basis the roles of control and target flip, so the same CNOTs copy each seed qubit onto kernel qubits of its set $F_i$.
A gate diagonal in the $X$ basis can then act on any copy, as long as the kernel leaves the copies intact, which holds automatically when every kernel qubit starts in $\ket+$.
Placing different rotations on different copies can turn gates that overlap on the seed into disjoint gates.
For a kernel with $Z$ stabilizers, such as a product of GHZ states, only some choices of copies keep the logical action, and Sec.~\ref{sec:cliffordlayers} gives a condition on the kernel's $Z$ stabilizers that selects them.

Conversely, every CSS code with a diagonal logical layer that acts only on the parities of some sets of qubits, each set with a qubit that lies in no other set, arises in this way (Appendix~\ref{app:converse}).

Seed gates that are not diagonal in either basis, such as a transversal Hadamard, and any other missing Clifford layers, are found and checked for each code explicitly.
The kernel must be searched and chosen to make such gates available while maintaining or raising the distance.

We first define the seed code, the kernel code, and their coupling (Sec.~\ref{sec:seedkernel}), then derive the stabilizers and the distance of the coupled code (Sec.~\ref{sec:distance}), its non-Clifford layer (Sec.~\ref{sec:nonclifford}), and its Clifford layers (Sec.~\ref{sec:cliffordlayers}), and close with the catalog of codes (Sec.~\ref{sec:families}).

\subsection{Seed, kernel, and coupling}\label{sec:seedkernel}

The seed code $\mathcal S$ carries the non-Clifford gate of the construction, a logical gate that is diagonal in the computational basis, such as transversal $T$ on the 15-qubit Reed--Muller code. It is a CSS code on $m$ qubits, labeled $i=1,\ldots,m$. Its stabilizer group is generated by $X$-type operators $X(c)=\prod_{i\in c}X_i$ and $Z$-type operators $Z(c)=\prod_{i\in c}Z_i$ on sets $c$ of qubits, all with sign $+1$, which we call positive. We write $\mathcal S=\CSS(S_X,S_Z)$, where $S_X$ and $S_Z$ collect the supports of its $X$-type and $Z$-type stabilizers. For a collection $Y$ of sets, $Y^\perp$ denotes the sets that meet every member of $Y$ an even number of times, so $X(a)$ commutes with every $Z(c)$, $c\in Y$, exactly when $a\in Y^\perp$. The weight $\wt(P)$ of a Pauli operator $P$ is the number of qubits on which it acts nontrivially. The distances $d_X(\mathcal S)$ and $d_Z(\mathcal S)$ are the smallest weights of its $X$-type and $Z$-type logical operators, and $d(\mathcal S)$ is the smaller of the two.

The kernel code $\mathcal K=\CSS(K_X,K_Z)$ is a second CSS code, on $r$ further qubits labeled $q$, whose role is to raise the distance. A kernel without logical qubits is a stabilizer state. Common examples in our collection are the all-plus state $\ket+^{\otimes r}$, stabilized by every $X_q$ and by no $Z$-type operator, and products of GHZ states, which have $Z$ stabilizers as well.

The coupling of the seed and kernel consists of two CNOT circuits. In the first, $U_M$, kernel qubit $q$ controls a CNOT onto seed qubit $i$ when $M_{iq}=1$, for a binary matrix $M\in\F^{m\times r}$. Seed qubit $i$ and the kernel qubits that control CNOTs onto it form the set $F_i$, of size $|F_i|\le\ell$. Each seed qubit lies only in its own set. A kernel qubit that controls CNOTs onto several seed qubits lies in all of their sets, as in the Carbon code, and kernel qubits outside every $F_i$ take no part in $U_M$. The sets are disjoint exactly when each kernel qubit controls at most one CNOT, and we state where a result needs this. The second circuit, an optional twist $V_N$, acts before $U_M$ and applies a CNOT from seed qubit $i$ onto kernel qubit $q$ when $N_{qi}=1$, for $N\in\F^{r\times m}$. The coupled code is
\begin{equation}
 \mathcal Q=W(\mathcal S\otimes\mathcal K),\qquad W=U_MV_N.
 \label{eq:coupling}
\end{equation}
Its stabilizers and logical operators are those of $\mathcal S\otimes\mathcal K$, conjugated by $W$. A CNOT from qubit $c$ to qubit $t$ maps $X_c\mapsto X_cX_t$ and $Z_t\mapsto Z_cZ_t$, and fixes $X_t$ and $Z_c$. A twist $V_N$ therefore fixes every seed $Z_i$, and $U_M$ attaches to it the kernel qubits of its set, so
\begin{equation}
 WZ_iW^\dagger=Z(F_i)\qquad\text{for every seed qubit }i,
 \label{eq:zparity}
\end{equation}
for every twist and every kernel. In words, each seed operator $Z_i$ becomes the $Z$ parity of its set, so the parity of $F_i$ carries the seed bit $s_i$, as anticipated at the start of this section. A diagonal seed gate is a function of the operators $Z_i$, so the coupling carries it to the same function of the parities $Z(F_i)$ (Sec.~\ref{sec:nonclifford}).

The coupling also acts on the remaining operators, the $X$-type operators and the kernel operators $Z_q$. In the $X$ basis a CNOT acts with control and target exchanged, so $U_M$ copies the $X$-basis value of each seed qubit onto the kernel qubits of its set. In operator form, $U_M$ maps each kernel $X_q$ to $X_q$ times $X_i$ on the seed qubits of the sets that contain $q$. The twist maps each seed $X_i$ to $X_i$ times $X_q$ on the kernel qubits with $N_{qi}=1$, and each such kernel $Z_q$ to $Z_qZ_i$, which $U_M$ then turns into $Z_qZ(F_i)$. 
A twist therefore leaves Eq.~\eqref{eq:zparity} unchanged, so the seed's diagonal gate is carried over exactly as without it, but it has two other effects. First, it spreads the seed's $X$ logical operators onto kernel qubits, which changes the $X$-type logical operators of $\mathcal Q$ and can raise $d_X$. Second, when the kernel has $Z$ stabilizers, it multiplies their images by parities $Z(F_i)$, which restricts the copies that the $X$-basis gates discussed above may use. For the all-plus kernel neither effect occurs. Each twist CNOT then targets a qubit in $\ket+$, an eigenstate of $X$, so it acts as the identity.

\subsection{Stabilizers and distance}\label{sec:distance}
The stabilizer group of the coupled code $\mathcal Q$ is generated by the images $WGW^\dagger$ of the stabilizer generators $G$ of the seed and of the kernel. A CNOT circuit maps $X$-type operators to $X$-type operators and $Z$-type to $Z$-type, so $\mathcal Q$ is again a CSS code, and it encodes $k_{\mathcal S}+k_{\mathcal K}$ logical qubits, the sum of the numbers of logical qubits of the seed and of the kernel. Its logical operators are likewise the images $WPW^\dagger$ of the logical operators $P$ of $\mathcal S\otimes\mathcal K$. For a CSS code $\mathcal C$, let $\mathcal L_X(\mathcal C)$ be its set of nontrivial $X$-type logical operators, that is, the $X$-type operators that commute with every stabilizer but are not stabilizers. Define $\mathcal L_Z(\mathcal C)$ in the same way, and write $\mathcal G_Z(\mathcal C)$ for the group of its $Z$-type stabilizers. For the product code, $\mathcal L_X(\mathcal S\otimes\mathcal K)$ consists of the operators $X(a)\otimes X(u)$ in which $X(a)$ and $X(u)$ commute with the $Z$-type stabilizers of the seed and of the kernel, respectively. At least one of the two factors must not be a stabilizer. The distances of $\mathcal Q$ are therefore
\begin{flalign}
 &\eqindent \hspace{3.1em} d_X=\minover{P\,\in\,\mathcal L_X(\mathcal S\otimes\mathcal K)}\,\wt\bigl(WPW^\dagger\bigr),&&\label{eq:dxgeneral}\\[2pt]
 &\eqindent \hspace{3.1em} d_Z=\minover{P\,\in\,\mathcal L_Z(\mathcal S\otimes\mathcal K)}\,\wt\bigl(WPW^\dagger\bigr),&&\label{eq:dzgeneral}
\end{flalign}
and $d=\min(d_X,d_Z)$, because a logical operator with both $X$ and $Z$ parts has a nontrivial pure $X$ or pure $Z$ component on no larger support. Counting the parts of a partition that $WPW^\dagger$ meets, instead of its qubits, gives the partition distance of that partition (Sec.~\ref{sec:nonclifford}) in the same way. In binary form, where a set of qubits is identified with its indicator vector $v\in\F^{m+r}$, $W$ maps $X(v)$ to $X(Ev)$ and $Z(v)$ to $Z(E^{-T}v)$ with
\begin{equation}
 E=\begin{pmatrix}I+MN&M\\N&I\end{pmatrix}
 \label{eq:Ematrix}
\end{equation}
on the $m$ seed and $r$ kernel qubits. Appendix~\ref{app:binary} derives Eqs.~\eqref{eq:dxgeneral} and~\eqref{eq:dzgeneral}, together with their part-counting form, from this binary form.

Two consequences carry the construction. First, let $X(v)$ be a nontrivial $X$ logical of $\mathcal Q$ for a kernel without logical qubits. Undoing the coupling gives $W^\dagger X(v)W=X(a)\otimes X(u)$ with $X(u)$ a kernel stabilizer, so $X(a)$ is a nontrivial seed $X$ logical. By Eq.~\eqref{eq:zparity}, $X(v)$ anticommutes with $Z(F_i)$ exactly when $v$ meets $F_i$ an odd number of times. The set $a$ therefore consists of the seed qubits whose sets meet $v$ an odd number of times, which we call the seed support of $v$. For disjoint sets $\{F_i\}$, the seed support has at most as many qubits as $v$, so $d_X\ge d_X(\mathcal S)$. If every kernel qubit lies in at most $\kappa$ sets, the seed support can have up to $\kappa$ times as many qubits, and only $d_X\ge\lceil d_X(\mathcal S)/\kappa\rceil$ follows. Without a twist, $N=0$, the circuit $W$ fixes every seed $X_i$, so the seed's $X$ logicals keep their weight for every $M$ and $d_X\le d_X(\mathcal S)$. A twist also places these logicals on kernel qubits, which is how it can raise $d_X$.

Then, for an untwisted coupling with disjoint sets and a kernel without logical qubits, the two bounds meet, and
\begin{flalign}
 &\eqindent \hspace{3.1em}  d_X=d_X(\mathcal S),&&\nonumber\\
 &\eqindent \hspace{3.1em} d_Z=\minover{Z(z)\,\in\,\mathcal L_Z(\mathcal S)\\ Z(y)\,\in\,\mathcal G_Z(\mathcal K)}\,\wt\Bigl(Z(y)\prod_{i\,\in\,z}Z(F_i)\Bigr).&&\label{eq:untwisted}
\end{flalign}
The second line holds for every $M$, and the first needs disjoint sets. Appendix~\ref{app:statekernel} proves these bounds and Eq.~\eqref{eq:untwisted}. An untwisted coupling with disjoint sets thus keeps the $X$ distance of the seed and can only raise the $Z$ distance.

Second, Eq.~\eqref{eq:zparity} carries every nontrivial seed $Z$ logical $Z(z)=\prod_{i\in z}Z_i$ to
\begin{equation}
 W Z(z)W^\dagger=\prod_{i\,\in\,z}Z(F_i),
 \label{eq:zlogical}
\end{equation}
for any twist and kernel. This operator is a nontrivial logical of $\mathcal Q$ supported in the union of the sets $F_i$ with $i\in z$, where a kernel qubit that lies in an even number of them drops out. Therefore
\begin{flalign}
 &\eqindent \hspace{3.1em} d_Z\le\minover{Z(z)\,\in\,\mathcal L_Z(\mathcal S)}\,\sum_{i\,\in\,z}|F_i|\ \le\ \ell\,d_Z(\mathcal S).&&\label{eq:dbound}
\end{flalign}
For the 15-qubit seed with $\ell=3$, this gives $d\le9$, which the $\qcode{59}{1}{9}$ codes \texttt{lift:width3-59-w8} and \texttt{lift:width3-59} of Table~\ref{tab:ft-codes} attain. 

The all-plus kernel, $\ket+^{\otimes r}$, which has no $Z$ stabilizers, is the simplest choice and gives the largest $d_Z$ for given $M$ and $N$. A kernel with $Z$ stabilizers can only lower $d_Z$ (Appendix~\ref{app:statekernel}) and constrains the $X$-diagonal layers (Sec.~\ref{sec:cliffordlayers}).
However, a kernel with $Z$ stabilizers can lower the check weight, as in the $\qcode{37}{1}{7}$ code \texttt{lift:ghz37} of Sec.~\ref{sec:example}, where four GHZ states in place of an all-plus kernel lower the optimal $\Wt$ from ten to eight while preserving $d=7$ and the gates (Appendix~\ref{app:37dist}).

\subsection{The non-Clifford layer and partition distance}\label{sec:nonclifford}
Let $D$ be a logical gate of the seed that is diagonal in the computational basis. It is a function $D=f(Z_1,\ldots,Z_m)$ of the commuting operators $Z_i$, and we call
\begin{equation}
 D_\oplus=f\bigl(Z(F_1),\ldots,Z(F_m)\bigr)
 \label{eq:paritylayer}
\end{equation}
its parity layer. Conjugating $D\otimes I$ by $W$ replaces each $Z_i$ by $Z(F_i)$, by Eq.~\eqref{eq:zparity}, so
\begin{equation}
 D_\oplus\,W=W\,(D\otimes I)
 \label{eq:paritytransfer}
\end{equation}
for every twist and every kernel [Fig.~\ref{fig:coupling}(a) and (b), and Appendix~\ref{app:ztransfer} for the proof on the full Hilbert space]. The parity layer $D_\oplus$ therefore preserves $\mathcal Q$, has the logical action of the seed gate $D$, and acts as the identity on any logical qubits of the kernel. For transversal $T$ on the seed, $D_\oplus=\prod_iT(F_i)$ up to a global phase, with $T(L)$ the parity rotation of Eq.~\eqref{eq:gates}. More generally, for a seed gate $D=\bigotimes_iD_i$ of single-qubit gates and disjoint sets $\{F_i\}$, $D_\oplus$ is the disjoint layer that applies $D_i$ to the parity of $F_i$, of width $\max_i|F_i|$. If instead $D=\prod_jD_j$, where each $D_j$ is diagonal and supported on a set $G_j$ of seed qubits, with the $G_j$ disjoint, then $D_\oplus=\prod_j(D_j)_\oplus$, and $(D_j)_\oplus$ applies $D_j$ to the parities of the sets $F_i$ with $i\in G_j$, as one gate on their union. These gates are disjoint when sets $F_i$ that share a qubit have their indices in the same $G_j$, and the width is the size of the largest union. In the Carbon code of Sec.~\ref{sec:carbon}, each pair gate becomes a gate on a triangle, the union of the two sets of the pair.

The \textit{partition distance} $\Dpart(\mathcal Q;\mathcal B)$ of a layer is the fewest parts of $\mathcal B$ that together carry a nontrivial logical operator, where the partition $\mathcal B$ consists of the gate supports of the layer and a singleton for every idle qubit. 
It is the min-distance of Ref.~\cite{jochymoconnor2018disjointness} with respect to $\mathcal B$ and quantifies fault tolerance: arbitrary faults on any $s$ gates are always correctable exactly when $\Dpart\ge2s+1$ (Appendix~\ref{app:nativefaults}).
A faulty native gate can corrupt all of its qubits at once, which is why the partition distance rather than the code distance governs a layer. For gates on at most $\ell$ qubits, $\lceil d/\ell\rceil\le\Dpart\le d$ (Appendix~\ref{app:nativefaults}). A code with $d\ge2\ell+1$ therefore has $\Dpart\ge3$ in every generating layer and corrects one arbitrary faulty native gate in each.

For example, the triangle layer of the Carbon code (Sec.~\ref{sec:carbon}) has $\Dpart=2$. The logical operator $Z_1Z_3Z_{11}Z_{12}$ is supported on the two triangles $\{1,2,12\}$ and $\{3,4,11\}$. Two is also a lower bound, since a logical operator of weight $d=4$ meets at least $\lceil d/3\rceil=2$ gates of width three. 

The partition distance of $D_\oplus$, on the partition $\mathcal B$ of its gate supports, can be computed exactly by Eqs.~\eqref{eq:dxgeneral} and~\eqref{eq:dzgeneral} with the weight replaced by the number of parts of $\mathcal B$ that the support of $WPW^\dagger$ meets. By Eq.~\eqref{eq:zlogical}, the image $WZ(z)W^\dagger$ of a seed $Z$ logical lies in the union of the sets $F_i$ with $i\in z$. Hence
\begin{flalign}
 &\eqindent \hspace{5.9em} \Dpart(\mathcal Q;\mathcal B)\le d_Z(\mathcal S)&&\label{eq:ceiling}
\end{flalign}
for every partition $\mathcal B$ in which each set $F_i$ lies inside one part, whatever the twist and the kernel. The gate supports of $D_\oplus$ form such a partition when the seed gate $D$ acts on every seed qubit. For disjoint sets, $\mathcal B$ can also be the sets $F_i$ together with singletons for the other kernel qubits. 
When the coupling is untwisted, the sets are disjoint, and the kernel has no logical qubits, the partition into the sets $F_i$, with singletons for the other kernel qubits, has exactly the distance of the seed,
\begin{flalign}
 &\eqindent  \hspace{5.9em} \Dpart(\mathcal Q;\{F_i\})=d(\mathcal S)&&\label{eq:untwistedblock}
\end{flalign}
(Appendix~\ref{app:nativefaults}). A seed with $d_Z(\mathcal S)=3$ therefore gives a parity layer $D_\oplus$ that cannot correct two faulty gates, however large the kernel makes the code distance. The $\qcode{37}{1}{7}$ code \texttt{lift:ghz37} of Sec.~\ref{sec:example} raises $d$ from three to seven and retains $\Dpart=3$ on its $D_\oplus$.

\subsection{Clifford layers}\label{sec:cliffordlayers}
While the seed-kernel construction ensures that a non-Clifford depth-one layer is preserved on the output code $\mathcal Q$, it does not by itself guarantee a set of Clifford generators that completes to a universal gate set. 

Diagonal Clifford gates of the seed, such as $S$ and CZ, transfer as parity layers by Eq.~\eqref{eq:paritytransfer}, for every kernel and every twist.
Gates diagonal in the $X$ basis are guaranteed to transfer for particular kernels.
For a set $L$ of seed qubits, the rotation $\sqrt X(L)=e^{-i\pi X(L)/4}$ of Eq.~\eqref{eq:gates} is such a gate, and a product $\prod_j\sqrt X(L_j)$ of rotations on sets $L_j$ of seed qubits can implement a logical Clifford gate even when the sets overlap.
When the kernel is all-plus, every logical seed gate that is diagonal in the $X$ basis transfers for every twist, through copies instead of parities (Lemma~\ref{lem:xtransfer} in Appendix~\ref{app:xtransfer}).
For a product of rotations, replace each seed qubit $i\in L_j$ by any qubit of $F_i$ that lies in no other set, chosen independently for each $j$.
The layer $\prod_j\sqrt X(L'_j)$ on the resulting sets $L'_j$ then implements the same logical gate on $\mathcal Q$.
The sets $L'_j$ can be made disjoint, so that the layer has depth one, exactly when no seed qubit $i$ lies in more of the sets $L_j$ than $F_i$ has qubits in no other set.

For kernels with $Z$ stabilizers but no logical qubits, consider a layer $\prod_j\sqrt X(L'_j)$ that implements a logical gate when the kernel is replaced by $\ket+^{\otimes r}$, with the same $M$ and $N$.
The layer preserves $\mathcal Q$ and keeps its logical action if every kernel $Z$ stabilizer $Z(y)\in\mathcal G_Z(\mathcal K)$, carried through the coupling, commutes with every rotation generator,
\begin{equation}
 \bigl[\,WZ(y)W^\dagger,\,X(L'_j)\,\bigr]=0\quad\text{for all }j,
 \label{eq:constraint}
\end{equation}
(Appendix~\ref{app:constraint}). A $Z$-type operator commutes with $X(L'_j)$ exactly when its support meets $L'_j$ an even number of times, so Eq.~\eqref{eq:constraint} says that no kernel $Z$ stabilizer, carried through the coupling, changes the parity of any gate support. The condition is sufficient but not necessary.
Conjugation by the layer multiplies a $Z$ stabilizer by $\pm iX(L'_j)$ for each rotation generator that anticommutes with it.
The layer therefore still preserves $\mathcal Q$, with the same logical action, when these factors multiply to an $X$ stabilizer of $\mathcal Q$ (Appendix~\ref{app:constraint}).
Without a twist, $W$ fixes every kernel $Z$ stabilizer, and Eq.~\eqref{eq:constraint} says that every kernel $Z$ stabilizer has even overlap with every gate support.

\subsection{A catalog of depth-one universal codes}\label{sec:families}

The general construction discussed so far yields a search procedure consisting of four steps. 
First, choose a seed whose $d_Z(\mathcal S)$ gives the required ceiling of Eq.~\eqref{eq:ceiling} on the parity layer. 
Second, choose the sets $F_i$ and the supports of the missing Clifford layers. 
Third, choose the twist and the kernel $Z$ stabilizers so that Eq.~\eqref{eq:constraint} holds, evaluate both distance minima, and find a generating set of low check weight. 
Fourth, verify the logical group~\cite{albert2026beyond} and the partition distance of each native partition.

Varying the seed, the kernel, and the sets $F_i$, we find \NDB{} depth-one universal codes, with distances from three to 49.
Table~\ref{tab:ft-codes} lists and Figure~\ref{fig:concept}(b) plots the \NFT{} codes that meet $d\ge2\ell+1$ and so correct one faulty gate per layer (Sec.~\ref{sec:nonclifford}), and Table~\ref{tab:ps-codes} lists the \NPS{} codes that only detect one.
The accompanying JSON file gives every generator, Clifford and non-Clifford, in the extended qiskit format of Ref.~\cite{albert2026beyond}, which builds on the code database of Qiskit QEC~\cite{crossqiskitqec2025,cross2025small}.
That file also lists the seed code, kernel code, and CNOT circuits for each depth-one universal code.
\input{tab_ftcodes}

Of the \NDB{} codes, 42 grow from the cube $\qcode{8}{3}{2}$~\cite{kubica2015unfolding,campbell2016smallest} and six from the frozen cube $\qcode{8}{2}{2}$ of Sec.~\ref{sec:carbon}.
Another 25 grow from the 15-qubit quantum Reed--Muller code $\qcode{15}{1}{3}$~\cite{knill1996,bravyi2005}, nine from the 64-qubit quantum Reed--Muller code $\qcode{64}{15}{4}$~\cite{rengaswamy2020optimality,hu2025climbing}, and three from the 31-qubit quantum Reed--Muller code $\qcode{31}{1}{3}$~\cite{anderson2014,landahl2013complex}.
Eight grow from a $\qcode{28}{2}{3}$ seed, obtained by gluing two copies of the 15-qubit code, one carrying $T$ and the other $T^\dagger$, along one pair of coordinates, which are then removed.
The remaining 17 codes come from ten further seeds, from the hypercube code $\qcode{16}{4}{2}$~\cite{vasmer2022morphing,hangleiter2025fault,barg2026geometric} to a $\qcode{512}{84}{8}$ code.

Native widths range from two to ten.
Of the \NDB{} codes, 35 have width two, 53 width three, seven width four, two width five, ten width six, one width nine, and two width ten.
An entry of particular width does not by itself rule out a generating set at a smaller width, and indeed we were able to find such narrower generating sets for several codes (discussed below).

Check weights run from six to 128, and 42 codes have checks of weight at most eight.
At width two the largest distance is eight, at widths three and four sixteen, at width five fifteen, and at width six 30.
Two codes stand out for high distance at low width. The code $\code{37}{1}{7}{8}{3}$\,\texttt{lift:ghz37} has the lowest check weight of the three 37-qubit codes, the smallest codes that correct a faulty gate in every layer. The width-nine code $\code{1369}{1}{49}{56}{9}$\,\texttt{lift:ghz37sq} has distance 49, more than five times its width.
Two codes, marked by a dagger in Table~\ref{tab:ft-codes}, are unions of two codes and are universal only through layers that couple the two.

By Lemma~\ref{lem:goursat} in Appendix~\ref{app:converse}, every code is the coupling of the seed listed with it to a kernel, so Eqs.~\eqref{eq:dxgeneral} and~\eqref{eq:dzgeneral} apply to all of them.
The bound of Eq.~\eqref{eq:dbound}, with $\ell$ the size of the largest set $F_i$ of that coupling, holds for all \NDB{} codes and is attained by 43 of them.
The partition-distance results, Eqs.~\eqref{eq:ceiling} and~\eqref{eq:untwistedblock}, concern the parity layer of that coupling.
For all but eight codes the parity layer is the recorded non-Clifford layer, so both results apply to it directly.

The eight exceptions are \texttt{uqlc:\allowbreak PCR121} and \texttt{uqlc:\allowbreak PCR128}, whose seed is the cube, and \texttt{uqlc:\allowbreak GFP383}, \texttt{uqlc:\allowbreak GFP516}, \texttt{uqlc:\allowbreak GFP705}, \texttt{uqlc:\allowbreak GFP754}, \texttt{uqlc:\allowbreak GFP761}, and \texttt{uqlc:\allowbreak GFP768}, whose seed is $\qcode{64}{15}{4}$.
For each of them the search found a narrower layer, made of three-qubit gates, that acts as the seed gate on the code space but is not a function of the set parities.
This narrower layer is the one recorded in the JSON file and in Table~\ref{tab:ft-codes}.
The parity layers prescribed by our construction exist for these codes as well.
Their sets have four qubits, five in one set of \texttt{uqlc:PCR121}, and they obey Eqs.~\eqref{eq:dbound} and~\eqref{eq:ceiling}, with \texttt{uqlc:PCR128} and \texttt{uqlc:GFP768} reaching $d=\ell\,d_Z(\mathcal S)$.

The tesseract of Sec.~\ref{sec:tesseract} shows the construction with a kernel that encodes logical qubits.
Its seed is the cube, with $S_X$ spanned by the set $\mathbf1_8$ of all eight vertices and $S_Z=\mathrm{RM}(1,3)$, the first-order Reed--Muller code spanned by $\mathbf1_8$ and the six faces, and its kernel is the same code with $X$ and $Z$ exchanged, coupled coordinate by coordinate with $M=I$ and $N=0$.
Coupling a seed to a copy of itself raises the distance and adds logical qubits but keeps the partition distance of the seed's gate.
Number the seed vertices $1,\ldots,8$ and their kernel partners $9,\ldots,16$ from left to right as in Fig.~\ref{fig:coupling}(c). Adding the twist $\CNOT(1\to9)\,\CNOT(1\to10)$ before the coupling changes only the $X$ checks and gives a second inequivalent $\code{16}{6}{4}{8}{2}$ code, \texttt{copycup:66fe2995} in Table~\ref{tab:ps-codes}, shown in Fig.~\ref{fig:coupling}(c).

\begin{figure}[t]
\centering\includegraphics[width=\columnwidth]{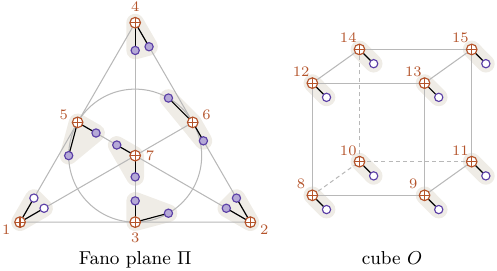}
\caption{The qubits of \texttt{lift:ghz37}. Seed qubits (orange) are the points of the Fano plane $\Pi$ and the cube $O$, each the target ($\oplus$) of CNOTs (shaded) from the kernel qubits (purple) of its set $F_i$. The four GHZ triples (filled) lie on the Fano lines not through point~$1$, and open dots are kernel qubits in $\ket+$.}
\label{fig:lift37}
\end{figure}

\section{A small fault-tolerant depth-one universal code}\label{sec:example}

The tesseract and the Carbon code detect a faulty native gate but cannot correct one. A code corrects one in every generating layer when $d\ge2\ell+1$ (Sec.~\ref{sec:nonclifford}), and the \NFT{} codes of Table~\ref{tab:ft-codes} meet this condition. The smallest of them have 37 qubits, among them \texttt{lift:ghz37} with parameters $\Qh=\code{37}{1}{7}{8}{3}$. We now describe $\Qh$ and provide a distance-preserving syndrome extraction circuit.
Despite its small distance of seven, the code performs well under circuit-level noise.

The code $\Qh$ couples the 15-qubit Reed--Muller code~\cite{knill1996,bravyi2005}, with $(d_X,d_Z)=(7,3)$ and transversal $T$ acting as $\bar T^\dagger$, to a 22-qubit kernel of four GHZ triples and ten qubits in $\ket+$ [Fig.~\ref{fig:coupling}(c)]. Untwisted coupling CNOTs join each seed qubit $i$ to one or two kernel qubits in a set $F_i$, which gives seven sets of three qubits and eight of two (Fig.~\ref{fig:lift37}). The GHZ kernel lowers the optimal check weight from ten, its value for the all-plus kernel, to eight while keeping $d=7$, and a final CNOT circuit leaves at most six checks on every qubit (Appendix~\ref{app:37seed}). 
The accompanying JSON file of Sec.~\ref{sec:families} lists the code and its generators, and Appendix~\ref{app:37} proves $d=7$ and the optimality of $\Wt=8$.

The generators of $\Qh$ are four depth-one layers. The parity layer $\prod_iT(F_i)$, of seven three-qubit and eight two-qubit gates, implements $\bar T^\dagger$ by Eq.~\eqref{eq:paritytransfer}. 
A layer of $\sqrt X(L)$ on seven disjoint triples $L$ implements $\sqrt{\bar X}^{\,\dagger}$ through copies, as in Sec.~\ref{sec:cliffordlayers}. The seed supports of these triples are the seven lines of a Fano plane formed by seven seed qubits. Bitwise CNOT between two blocks implements $\overline{\CNOT}$. Since $\bar H\propto\bar S^\dagger\sqrt{\bar X}^{\,\dagger}\bar S^\dagger$, these layers generate the logical Clifford group, and $\bar T^\dagger$ completes a universal set. Both single-block layers have $\Dpart=3$ and bitwise CNOT has $\Dpart=7$, so every generating layer corrects one arbitrary faulty native gate (Appendix~\ref{app:37gates}). For the parity layer, three is the ceiling $d_Z(\mathcal S)$ of Eq.~\eqref{eq:ceiling}, so two faulty gates in that layer can already cause a logical error.

A distance-seven code protects a computation only if its syndrome extraction does not turn a few faults into a logical error. The checks of $\Qh$ reach weight eight, so a single ancilla fault could spread to a high-weight data error, and we add flag qubits~\cite{chamberland2018} to 14 of the 36 checks. One interleaved round measures all checks, with one ancilla per check and 15 flag qubits, each used by a single check, on 88 qubits with 192 CNOTs in 12 layers. Among such rounds, it is optimal in flags, CNOTs, layers, and qubits at once. The round preserves the distance: its circuit distance, the fewest faults that flip the logical without flipping any detector, is $d_{\rm circ}=d=7$ for any number of rounds. Appendix~\ref{app:extraction} gives the circuit and the proofs of $d_{\rm circ}=7$ and of its optimality.

We simulate memory and every generator with Stim~\cite{gidney2021stim} and decode with the Tesseract decoder~\cite{tesseract2025}. All gate, reset, and measurement errors are tied to the two-qubit gate error $p$. A three-qubit native gate fails as often as a CNOT, which is optimistic, since demonstrated three-atom gates are less accurate than two-qubit gates~\cite{levine2019,evered2023}. Idle qubits decay with the coherence times $T_1=30$\,s and $T_2=10$\,s assumed in Ref.~\cite{zhou2025}. Each experiment prepares every data qubit in $\ket0$ or in $\ket+$, runs cycles of one gate layer and one extraction round, and reads out in the same basis (Appendix~\ref{app:noise}). 

\begin{table}[t]
\caption{Logical failure of $\Qh$ at $p=10^{-3}$ in the noise model of Appendix~\ref{app:noise}, with coherence times $T_1=30$\,s and $T_2=10$\,s, per round for memory and per cycle of one layer and one round for the generators. The $X$ and $Z$ columns prepare and read out in that basis. The last column adds these two marginal rates, so a logical $Y$ error is counted in both. The $X$-basis entry of $\bar T^\dagger$ ends with a noiseless round and readout, and its $Z$-basis entry is a quarter of the failure of four cycles. The $Z$-basis slope of $\bar S^\dagger$, $(-0.3\pm0.6)\times10^{-7}$, is consistent with zero.}
\label{tab:37rates}\centering\small\setlength{\tabcolsep}{4pt}
\begin{tabular}{lccc}\toprule
 & $X$ basis & $Z$ basis & Sum\\\midrule
Memory & $1.5\times10^{-7}$ & $1.6\times10^{-7}$ & $3.1\times10^{-7}$\\
$\bar S^\dagger$ & $2.6\times10^{-5}$ & $<10^{-7}$ & $2.6\times10^{-5}$\\
$\sqrt{\bar X}^{\,\dagger}$ & $2.4\times10^{-7}$ & $2.8\times10^{-6}$ & $3.1\times10^{-6}$\\
$\overline{\CNOT}$ & $6.6\times10^{-7}$ & $3.4\times10^{-7}$ & $1.0\times10^{-6}$\\
$\bar T^\dagger$ & $2.5\times10^{-4}$ & $6.7\times10^{-7}$ & $2.5\times10^{-4}$\\\bottomrule
\end{tabular}\end{table}

The $\bar T^\dagger$ layer is not Clifford, but it maps a Pauli error to the same error times a Clifford correction on the sets $F_i$ that its $X$ part meets an odd number of times. We therefore simulate it exactly with stabilizer methods~\cite{takada2026}. A classical controller decodes the record before each layer and applies the correction on the sets that it estimates odd. It misjudges at least one set in about one layer in ten. The final decoder also sees the later rounds, and on every set whose decision it would revise it allows a $Z$ error at no cost, which lowers the failure about tenfold. 

Every generator of $\Qh$ thus runs as one depth-one layer followed by one extraction round, with the failures of Table~\ref{tab:37rates}. The single-block layers limit this performance. Their partition distance is only $\Dpart=3$, so two faulty native gates can already cause a logical error, and the cycles of $\bar S^\dagger$ and $\bar T^\dagger$ fail about 80 and 800 times as often as a memory round. Preliminary simulations of the $\code{59}{1}{9}{8}{3}$ code \texttt{lift:width3-59-w8} in the same noise model support this picture. Its memory and $\overline{\CNOT}$ fail at most about $2\times10^{-8}$ per round and $1.5\times10^{-7}$ per cycle, seven to fourteen times less often than those of $\Qh$. The single-block layers, whose partition distance is still $\Dpart=3$, do not share this gain. The $\bar S^\dagger$ and $\bar T^\dagger$ layers fail about $1.4\times10^{-5}$ and $1.2\times10^{-4}$ per cycle, within a factor of three of those of $\Qh$. The $\sqrt{\bar X}^{\,\dagger}$ layer fails at most about $6\times10^{-5}$ per cycle, more often than in $\Qh$. Tesseract is too slow for this code, so its memory and Clifford layers are decoded by belief propagation, with Tesseract on the 3 to 7\% of shots where it does not converge. 

For seed codes with diagonal non-Clifford gates (``the parity layer''), a larger partition distance needs a seed with a larger $Z$ distance by Eq.~\eqref{eq:ceiling}. Table~\ref{tab:ft-codes} contains six codes with $d\ge4\ell+1$, for which $\Dpart\ge\lceil d/\ell\rceil\ge5$ in every generating layer (Appendix~\ref{app:nativefaults}), so that each layer corrects two arbitrary faulty native gates. They are \texttt{uqlc:\allowbreak GFP705}, \texttt{uqlc:\allowbreak GFP754}, \texttt{uqlc:\allowbreak GFP761}, and \texttt{uqlc:\allowbreak GFP768} at width three, \texttt{lift:\allowbreak proj570} at width six, and \texttt{lift:\allowbreak ghz37sq} at width nine. Their checks have weights from 36 to 96, so their extraction circuits are larger, and we plan to simulate them under circuit-level noise in future work.

\section{Conclusions}\label{sec:conclusion}
The Eastin--Knill theorem forbids a universal set of transversal logical gates on any code that detects single-qubit errors~\cite{eastin2009}. It fixes one partition of the qubits in advance and constrains only the gates that respect it. Depth-one layers of few-qubit gates on \textit{different} partitions of the same code escape this constraint. The layers of any one partition generate a finite logical group when every part has fewer than $d$ qubits, but layers on several partitions can generate a universal one. We call such codes depth-one universal: every logical generator is one layer followed by error correction on a fixed code, with no magic-state distillation, code switching, or lattice surgery. The 16-qubit tesseract code realizes this with four layers of two-qubit gates.

We develop a procedure that raises the distance of a code with a depth-one non-Clifford gate while increasing the width of that gate, the number of qubits it acts on, as little as possible. A CNOT circuit couples the code (the seed) to an auxiliary code (the kernel) so that each seed qubit becomes the parity of a small group of qubits. A diagonal seed gate then acts on these parities with the same logical action, and its width grows only to the size of the largest group. The kernel raises the distance, by Eq.~\eqref{eq:dbound} up to the width times the $Z$ distance of the seed (the weight of its lightest $Z$-type logical operator). 

We construct \NDB{} depth-one universal codes with 12 to 1369 qubits, one to 210 logical qubits, distances from three to 49, and widths from two to ten. Of these, \NFT{} codes have every generating layer, Clifford and non-Clifford alike, of width $\ell\le\lfloor(d-1)/2\rfloor$, so one faulty gate in any layer leaves a correctable error. They include the $\qcode{37}{1}{7}$ code \texttt{lift:ghz37} and the $\qcode{59}{1}{9}$ code \texttt{lift:width3-59-w8} of Table~\ref{tab:ft-codes}, both with checks of weight at most eight, and codes with 70 and 210 logical qubits. 

For this $\qcode{37}{1}{7}$ code, with its check weights redistributed by a further CNOT circuit, we give a syndrome-extraction circuit with flag qubits that preserves its distance, and we evaluate memory and every generator under circuit-level noise. Memory and the $\overline{\CNOT}$ and $\sqrt{\bar X}^{\,\dagger}$ layers fail rarely, but the $\bar S^\dagger$ and $\bar T^\dagger$ layers fail much more often. This reflects the low partition distance $\Dpart=3$ of the parity layer: two faulty gates, or for $\bar T^\dagger$ two misjudged corrections, can already cause a logical error.

Our codes grow from sixteen seeds and have check weights from six to 128. We anticipate that many more depth-one universal codes can be constructed, and Sec.~\ref{sec:families} gives a search recipe. We are particularly excited about the prospect of high-rate quantum LDPC codes with such gates. Our codes with 35 or more logical qubits all have checks of weight 32 or more. A family of bounded width, whose gates spread one fault to boundedly many qubits, must evade the no-go theorem of Ref.~\cite{webster2022universal}. That theorem excludes concatenated and conventional topological codes but leaves room for codes with distance proportional to length~\cite{jochymoconnor2018disjointness}. Quantum LDPC codes with transversal or constant-depth non-Clifford gates~\cite{lin2024transversal,zhu2025nonclifford,golowich2025quantum,breuckmann2026cups,menon2026magic,golowich2026cup} may serve as seeds. Codes of higher distance should also yield fully unitary fault-tolerant compilers, with the non-Clifford generator applied directly to the data, at the error rates near $10^{-3}$ of current two-qubit gates~\cite{evered2023,evered2026}. This requires a large partition distance in every generating layer, not only a large code distance. Finding codes of practical size with this property and with low-weight stabilizers is therefore an important direction. Every layer corrects two faulty gates once $d\ge4\ell+1$, which six codes of Table~\ref{tab:ft-codes} satisfy, but a gate on parities also needs a seed with $Z$ distance at least five [Eq.~\eqref{eq:ceiling}].

The point of depth-one universal codes is to avoid magic-state factories altogether. If the physical error rate is too high for the direct non-Clifford layer at the distances of our codes, however, factories built from the same codes remain promising. A separate block applies the non-Clifford layer and is kept only if no error is detected. A transversal CNOT~\cite{huang2023homomorphic} then teleports the resulting magic state into a data block, in the role that magic-state cultivation~\cite{gidney2024cultivation} plays for the surface code. For the $\qcode{37}{1}{7}$ code at $p=10^{-3}$, half of the attempts are accepted, and an accepted state fails with an estimated $5.8\times10^{-12}$ (Appendix~\ref{app:magic}). The teleported gate is then limited by the transversal CNOT and the Clifford correction, whose rates in Table~\ref{tab:37rates} lie well below that of the direct layer.

\begin{acknowledgments}
VVA acknowledges NSF grants OMA-2120757 (QLCI) and CIF-2330909. Parts of this manuscript were prepared with the assistance of
ChatGPT 5.6, developed by OpenAI, and Claude Code, developed by Anthropic. These agents were used to run numerical searches, verify
proofs, and edit the writing of some sections, in accordance with the authors' instructions. All content, claims, and conclusions have
been reviewed and verified by the authors to ensure accuracy and originality. Certain products, commercial and otherwise, are mentioned
in this publication. These mentions are for informational purposes only, and do not imply recommendation or endorsement by NIST.
\end{acknowledgments}

\bibliography{references}

\clearpage
\appendix

\section{Proofs for the seed--kernel construction}\label{app:theorems}
\input{tab_pscodes}
This appendix proves the results of Sec.~\ref{sec:construction} in the order in which they appear there. Appendix~\ref{app:binary} writes the coupling in binary form and derives the distance minima, and Appendix~\ref{app:statekernel} evaluates them for kernels without logical qubits. Appendices~\ref{app:ztransfer} and~\ref{app:nativefaults} prove the transfer of diagonal seed gates and the correctability of faulty parts. Appendices~\ref{app:xtransfer} and~\ref{app:constraint} treat gates diagonal in the $X$ basis, and Appendix~\ref{app:converse} proves the converse.

\subsection{Binary form of the coupling}\label{app:binary}
We write the coupling in binary form, which gives the check spaces of $\mathcal Q$ and proves Eqs.~\eqref{eq:dxgeneral} and~\eqref{eq:dzgeneral}. The product code's $X$ normalizer and stabilizer spaces are $S_Z^\perp\oplus K_Z^\perp$ and $S_X\oplus K_X$. The $Z$ spaces are $S_X^\perp\oplus K_X^\perp$ and $S_Z\oplus K_Z$. The CNOT circuit $W$ permutes computational basis states, $\ket v\mapsto\ket{Ev}$, and sends the supports of pure $X$-type and $Z$-type operators to $Ev$ and $E^{-T}v$. It preserves ranks, commutation, and logical equivalence, which gives the check spaces below and the two minima, as well as their part-counting form.

For an $X$-type operator $X(a)\otimes X(u)$ and a $Z$-type operator $Z(z)\otimes Z(y)$ of the product code, with seed parts $a,z\in\F^m$ and kernel parts $u,y\in\F^r$, let $\mathcal X_N(a,u)=E(a,u)$ and $\mathcal Z_N(z,y)=E^{-T}(z,y)$ be the supports of their images under $W$. For $E$ of Eq.~\eqref{eq:Ematrix}, these maps and the inverse transpose are
\begin{align}
 E^{-T}&=\begin{pmatrix}I&N^T\\M^T&I+M^TN^T\end{pmatrix},\nonumber\\
 \mathcal X_N(a,u)&=(a+M(Na+u),Na+u),\nonumber\\
 \mathcal Z_N(z,y)&=(z+N^Ty,M^T(z+N^Ty)+y).
 \label{eq:imagesapp}
\end{align}
The full check spaces are $Q_X=\mathcal X_N(S_X,K_X)$ and $Q_Z=\mathcal Z_N(S_Z,K_Z)$, with no restriction on $N$. At $N=0$ they become
\begin{equation}
 Q_X=\{(g+Mb,b)\},\qquad Q_Z=\{(z,M^Tz+y)\},
\end{equation}
with $g\in S_X$, $b\in K_X$, $z\in S_Z$, and $y\in K_Z$.

\subsection{Kernels without logical qubits}\label{app:statekernel}
For $K_X=K_Z^\perp$, a product logical is nontrivial exactly when its seed factor is nontrivial. Substituting $v=Na+u$ in Eqs.~\eqref{eq:dxgeneral} and~\eqref{eq:dzgeneral}, the minima become
\begin{align}
 d_X&=\min_{\substack{a\in S_Z^\perp\setminus S_X\\v\in Na+K_Z^\perp}}
       [\wt(a+Mv)+\wt(v)],\label{eq:dxstate}\\
 d_Z&=\min_{\substack{z\in S_X^\perp\setminus S_Z\\y\in K_Z}}
       \wt\!\left(E^{-T}\binom zy\right).\label{eq:dzstate}
\end{align}
Here $E^{-T}(z,0)$ is $Z$ on the seed qubits of $z$ and on every kernel qubit that lies in an odd number of the sets $F_i$ with $i\in z$. For disjoint sets it is the seed operator $Z(z)$ copied onto every qubit of these sets. For every $M$, $E^{-T}(0,y)$ is the physical image of the kernel stabilizer $Z(y)$. Taking the seed support sends $(a+Mv,v)$ to $a$, and $\wt(a)\le\wt(a+Mv)+\kappa\,\wt(v)$ when every kernel qubit lies in at most $\kappa$ sets. This proves the seed-support bounds of Sec.~\ref{sec:distance}. At $N=0$ the choice $v=0$ is allowed for every $M$ and gives $d_X\le d_X(\mathcal S)$, which for disjoint sets, $\kappa=1$, matches the seed-support bound. For a kernel with logical qubits the bound $d_X\le d_X(\mathcal S)$ still holds at $N=0$, but equality need not, since the kernel can introduce lighter logical operators. At $N\ne0$ that choice is allowed only when $Na\in K_Z^\perp$, so the twist can raise $d_X$. At $N=0$, Eq.~\eqref{eq:dzstate} becomes the $Z$ distance of Eq.~\eqref{eq:untwisted}.

If $K_Z=0$, the kernel is all-plus and $V_N$ acts trivially on the product code, so a twist can change the code only through the kernel's $Z$ stabilizers.

For fixed $M$ and $N$, enlarging $K_Z$ allows more $y$ in the $Z$ minimization, so $Z$ stabilizers of the kernel can only lower $d_Z$, as stated in Sec.~\ref{sec:distance}. It also allows fewer $v$ in the $X$ minimization, so they can only raise $d_X$.

\subsection{Transfer of diagonal seed gates}\label{app:ztransfer}
We prove Eq.~\eqref{eq:paritytransfer} for every kernel, with or without logical qubits, and for every $M$ and $N$. The circuit $U_MV_N$ permutes computational basis states, $\ket{s,u}\mapsto\ket{E(s,u)}$, and every image $(x,x')=E(s,u)$ satisfies $x+Mx'=s$. The gate $D_\oplus$ of Sec.~\ref{sec:nonclifford} therefore multiplies $\ket{E(s,u)}$ by $e^{i\phi(s)}$, the phase that $D\otimes I$ gives to $\ket{s,u}$. This proves Eq.~\eqref{eq:paritytransfer} on the full Hilbert space, not only on the code space. If $D$ is a logical gate of the seed, $D_\oplus$ therefore preserves $\mathcal Q$, acts on the seed's logical qubits as $D$ does, and acts as the identity on any logical qubits of the kernel. Statevector simulations of the Steane and 15-qubit Reed--Muller seeds, with random sets $F_i$ and twists, confirm this transfer for $T$ and $T^2$ with random kernels that have $Z$ stabilizers.

\subsection{Native faults and the untwisted equality}\label{app:nativefaults}
This appendix makes precise how many faulty native gates a layer tolerates, which Secs.~\ref{sec:nonclifford} and~\ref{sec:example} use, and then proves Eq.~\eqref{eq:untwistedblock}. A faulty native gate may apply an arbitrary error to its whole support, so a set of faulty gates is an arbitrary error on a set of parts of the partition $\mathcal B$ of Sec.~\ref{sec:nonclifford}. We first prove that arbitrary faults on any $s$ parts are always correctable exactly when $\Dpart\ge2s+1$. A product of two errors, each on at most $s$ parts, occupies at most $2s$ parts. If $2s<\Dpart$, that product is detectable or a stabilizer. The quantum error-correction conditions give ideal recovery for all such errors. Conversely, split a logical on the fewest parts into two pieces of at most $\lceil\Dpart/2\rceil$. They have equal syndrome and different logical action. No decoder corrects all errors of that size. The largest such $s$ is therefore $\tpart$ in
\begin{equation}
 \left\lceil d/\ell\right\rceil\le\Dpart\le d,\qquad
 \tpart=\left\lfloor\frac{\Dpart-1}{2}\right\rfloor ,
 \label{eq:locbound}
\end{equation}
where the bounds hold for a layer whose gates act on at most $\ell$ qubits. The lower bound holds because $j$ parts cover at most $j\ell$ qubits, while every nontrivial logical operator has weight at least $d$. The upper bound holds because a logical operator of weight $d$ meets at most $d$ parts. Every pair layer of the tesseract has $\Dpart=2$, since in each of them two pairs cover the support of a weight-four logical operator, while $H^{\otimes16}$ and $\mathrm{CNOT}^{\otimes16}$ have parts that hold one qubit of each block they act on, and therefore $\Dpart=d=4$.

A gate of width $\ell$ spreads one fault to at most $\ell$ qubits. The distance alone guarantees that the code detects such an error when $\ell\le d-1$ and corrects it when $\ell\le t=\lfloor(d-1)/2\rfloor$, that is, when $d\ge2\ell+1$. In that case $d/\ell>2$, so Eq.~\eqref{eq:locbound} gives $\Dpart\ge\lceil d/\ell\rceil\ge3$ for the partition of every generating layer. Each layer then has $\tpart\ge1$ and corrects one arbitrary faulty native gate, which is the condition of Table~\ref{tab:ft-codes}. A parity layer also obeys $\Dpart\le d_Z(\mathcal S)$ by Eq.~\eqref{eq:ceiling}. The codes of Table~\ref{tab:ft-codes} on the 15-qubit seed therefore have exactly $\Dpart=3$ on their parity layer, even the three $\qcode{59}{1}{9}$ codes \texttt{lift:width3-59-w8}, \texttt{lift:width3-59}, and \texttt{lift:twisted59}, of distance nine. With $\Dpart=3$, two faulty native gates in one layer can already cause a logical error. Their error can differ by a logical operator from an error on the third part of a three-part logical support, and ideal recovery can then apply the wrong correction. Such a layer fails at order $p^2$ in the native-gate error $p$, while the memory of a distance-nine code, with $t=4$, fails only at order $p^5$. More generally, a layer fails at order $p^{\tpart+1}$.

The same counting applies to constant-depth circuits, since every such circuit is also a depth-one layer, at a larger width. A circuit of $\Delta$ layers equals one layer whose gate supports are the connected components of the union of its gate supports. The merged native gate on each component is the restriction of the circuit to it. Its width $\ell_{\rm eff}$ is the size of the largest component. This width can reach $n$ already at $\Delta=2$, for example for two staggered layers of two-qubit gates along a line. The argument for $d\ge2\ell+1$ applies with $\ell_{\rm eff}$ in place of $\ell$, so one arbitrary faulty merged gate is correctable when $d\ge2\ell_{\rm eff}+1$. In general it is correctable exactly when the partition into components has $\Dpart\ge3$. A fault of one of the original gates of width $\ell$ is more benign, since it spreads only through its forward light cone of at most $\ell^\Delta$ qubits. Such faults are therefore correctable when $d\ge2\ell^\Delta+1$, even when the components are larger. The codes of Table~\ref{tab:ft-codes} need neither bound beyond $\Delta=1$, since each of their generators is a single layer with $d\ge2\ell+1$.

We then prove Eq.~\eqref{eq:untwistedblock}. At $N=0$ and for disjoint sets $F_i$, $U_M$ acts within the partition into these sets. Before coupling, a normalizer factors into seed and kernel normalizers, with at least one logical factor. The number of parts it occupies is at least that of either factor. Taking the other factor to be identity attains the corresponding minimum. Hence
\begin{equation}
 \Dpart(\mathcal Q;\mathcal B)=
 \min\{d(\mathcal S),\Dpart(\mathcal K;\mathcal B_K)\},
 \label{eq:locproduct}
\end{equation}
where $\mathcal B$ is the partition into the sets $F_i$ and singletons for the other kernel qubits, and $\mathcal B_K$ groups the kernel qubits of each $F_i$ and keeps every kernel qubit outside the sets $F_i$ as a singleton. A kernel without logical qubits contributes infinity to the minimum. This proves Eq.~\eqref{eq:untwistedblock}. The same argument holds at $N\ne0$ when every CNOT of $V_N$ stays inside one set $F_i$. Otherwise the exact part-counting minima of Eqs.~\eqref{eq:dxgeneral} and~\eqref{eq:dzgeneral} apply.

\subsection{Transfer of $X$-diagonal seed gates for the all-plus kernel}\label{app:xtransfer}
Gates diagonal in the $X$ basis do not transfer for every kernel. For the all-plus kernel they transfer through copies rather than parities, as stated in Sec.~\ref{sec:cliffordlayers}.
\begin{lemma}[Transfer of $X$-diagonal seed gates]\label{lem:xtransfer}
Let the kernel be all-plus, $\mathcal K=\ket+^{\otimes r}$, with any $M$ and $N$. Let $D_X$ be a logical gate of the seed that is diagonal in the $X$ basis, with phase $e^{i\phi(z)}$ on the $X$-basis string $z\in\F^m$ of the seed qubits. For each seed qubit $i$ choose a qubit $\sigma(i)\in F_i$ that lies in no other set, for example seed qubit $i$ itself. Then the operator that is diagonal in the $X$ basis, with phase $e^{i\phi(x_{\sigma(1)},\ldots,x_{\sigma(m)})}$ on the physical $X$-basis string $x$, preserves $\mathcal Q$ and acts on it as $D_X$ does on the seed. In particular, if $D_X=\prod_j\sqrt X(L_j)$ for sets $L_j$ of seed qubits, the same logical gate is implemented by $\prod_j\sqrt X(L'_j)$, where $L'_j$ replaces each seed qubit $i\in L_j$ by any qubit of $F_i$ that lies in no other set, chosen independently for each $j$.
\end{lemma}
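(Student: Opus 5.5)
The plan is to work entirely in the $X$ (Hadamard) basis, where the coupling becomes a classical copying map and the all-plus kernel becomes the all-zero string. The argument runs in the same spirit as Appendix~\ref{app:ztransfer}, with the roles of control and target exchanged.

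First I dispose of the twist. For $\mathcal K=\ket+^{\otimes r}$ every CNOT of $V_N$ targets a qubit in $\ket+$, an eigenstate of $X$. Hence $V_N(\ket\psi\otimes\ket+^{\otimes r})=\ket\psi\otimes\ket+^{\otimes r}$ for every seed state $\ket\psi$, as already noted in Sec.~\ref{sec:seedkernel} and Appendix~\ref{app:statekernel}. The encoding map $\ket\psi\mapsto W(\ket\psi\otimes\ket+^{\otimes r})$ therefore coincides with $\ket\psi\mapsto U_M(\ket\psi\otimes\ket+^{\otimes r})$. It thus suffices to prove the intertwining relation
\begin{equation*}
 D'\,U_M\bigl(\ket\psi\otimes\ket+^{\otimes r}\bigr)=U_M\bigl(D_X\ket\psi\otimes\ket+^{\otimes r}\bigr)
\end{equation*}
for every seed code state $\ket\psi$, where $D'$ is the proposed physical operator. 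This relation shows both that $D'$ preserves $\mathcal Q$ and that it acts logically as $D_X$.

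Second, I compute $U_M$ in the $X$ basis. Conjugation by $H^{\otimes(m+r)}$ turns each CNOT $q\to i$ into a CNOT $i\to q$. So on $X$-basis strings $\ket{z,w}_X$ the circuit $U_M$ acts as $(z,w)\mapsto(z,w+M^Tz)$. The kernel state $\ket+^{\otimes r}$ is the $X$-basis string $w=0$. Expanding $\ket\psi=\sum_z\psi_z\ket z_X$, I obtain
\begin{equation*}
 U_M(\ket\psi\otimes\ket+^{\otimes r})=\sum_z\psi_z\ket{z,M^Tz}_X .
\end{equation*}
On each such string the physical bit of kernel qubit $q$ is $x_q=\sum_{i:\,M_{iq}=1}z_i$, while the bit of seed qubit $i$ is $x_i=z_i$.

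Third, I use the choice of $\sigma$. If $\sigma(i)=i$, then $x_{\sigma(i)}=z_i$. If $\sigma(i)=q$ is a kernel qubit lying in no set other than $F_i$, then $M_{jq}=1$ only for $j=i$, so again $x_q=z_i$. Consequently $D'$, which multiplies $\ket x_X$ by $e^{i\phi(x_{\sigma(1)},\ldots,x_{\sigma(m)})}$, multiplies each $\ket{z,M^Tz}_X$ by $e^{i\phi(z)}$. This is exactly the phase that $D_X\otimes I$ gives to $\ket{z,0}_X$ before the coupling, which proves the displayed relation. Note that the identity holds on the image of seed$\,\otimes\ket+^{\otimes r}$, not on the full Hilbert space, unlike Eq.~\eqref{eq:paritytransfer}. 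That restriction is all that is needed.

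For the special case, I write $\sqrt X(L_j)$ up to a global phase as a function of the parity $\bigoplus_{i\in L_j}z_i$ of the $X$-basis bits. Since the phase $\phi$ of $D_X$ is then a sum of terms, one per $j$, each term may be evaluated with its own independent choice of copies, since the argument above applies per seed qubit and per term. This replaces $L_j$ by $L'_j$ and gives $\prod_j\sqrt X(L'_j)$.

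I expect the main obstacle to be bookkeeping rather than substance. The key points are that the twist is harmless only because it acts on the product state, and that "lies in no other set" is precisely the condition making $x_q$ a faithful copy of $z_i$. Everything else is a direct computation.
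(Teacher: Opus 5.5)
Your proof is correct and follows the paper's argument: you discard $V_N$ because its CNOTs target $\ket+$, note that in the $X$ basis $U_M$ copies each $z_i$ onto every kernel qubit lying only in $F_i$, and match phases on the image of the seed states tensored with $\ket+^{\otimes r}$. The only minor difference is the product case, where you sum the per-factor phases directly on the image strings, whereas the paper proves the intertwining $\Phi_\sigma U_M=U_M(D\otimes I)$ for an arbitrary, not necessarily logical, $X$-diagonal $D$ and composes it factor by factor; both rest on the same fact that every admissible copy carries $z_i$.
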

\begin{proof}
A CNOT whose target is in $\ket+$ acts as the identity, so $V_N$ acts trivially on every state $\psi\otimes\ket+^{\otimes r}$, and $\mathcal Q$ equals the untwisted code. In the $X$ basis, $\CNOT(a\to b)$ adds the bit of $b$ to the bit of $a$. Each CNOT of $U_M$ has a kernel qubit as control and a seed qubit as target, so $U_M$ maps $\ket z_X\otimes\ket0_X$ to the $X$-basis string in which seed qubit $i$ carries $z_i$ and each kernel qubit carries the sum of $z_i$ over the sets $F_i$ that contain it. In particular, every qubit that lies only in $F_i$ carries $z_i$. Now let $D$ be any operator on the seed qubits that is diagonal in the $X$ basis, with phase $e^{i\phi(z)}$, logical or not, and let $\Phi_\sigma$ be the physical operator built from it as in the statement. On each of the strings above $x_{\sigma(i)}=z_i$, so $\Phi_\sigma U_M=U_M(D\otimes I)$ on the whole space of states $\psi\otimes\ket+^{\otimes r}$. For $D=D_X$, which preserves the seed code, the right side preserves $\mathcal Q=U_M(\mathcal S\otimes\ket+^{\otimes r})$ and acts on it as $D_X$ acts on the seed. This proves the first statement. Since $D\otimes I$ maps the space of states $\psi\otimes\ket+^{\otimes r}$ into itself, the relation composes under products, with an independent choice $\sigma_j$ for each factor. The operator $\sqrt X(L'_j)$ is $\Phi_{\sigma_j}$ for $D=\sqrt X(L_j)$, because its phase depends only on the parity of the bits of $L'_j$, although $\sqrt X(L_j)$ alone need not be a logical gate. Hence $\prod_j\sqrt X(L'_j)\,U_M=U_M(D_X\otimes I)$, and the same argument proves the second statement.
\end{proof}
The supports $L'_j$ can be chosen disjoint exactly when each seed qubit $i$ lies in at most as many of the sets $L_j$ as $F_i$ has qubits in no other set. For disjoint sets this number is $|F_i|$. A qubit shared by several sets carries the sum of their bits and cannot serve as a copy. For a kernel with $Z$ stabilizers, even the qubits of $F_i$ that lie in no other set no longer carry equal $X$-basis bits, and Eq.~\eqref{eq:constraint} selects choices that still work. For the $\qcode{37}{1}{7}$ construction we checked both statements on every $X$-basis string of the code. With the all-plus kernel, 200 random choices of copies for the seven Fano lines all implement $\sqrt{\bar X}^{\,\dagger}$ on all 2048 strings. With the GHZ kernel, none of 200 random choices does, while the choice of Eqs.~\eqref{eq:rootlines} and~\eqref{eq:ghz} in Appendix~\ref{app:37} does on all 524288 strings. The statevector simulations above also confirm the lemma.

\subsection{Kernels with $Z$ stabilizers and the kernel constraint}\label{app:constraint}
For a kernel without logical qubits other than the all-plus state, Eq.~\eqref{eq:constraint} is a sufficient condition for a layer of $\sqrt X(L)$ rotations to keep its action. In the $X$ basis, a transported state with a kernel without logical qubits has support
\begin{equation}
 E^{-T}\binom{z_0+h}{y},\qquad h\in S_Z,\quad y\in K_Z,
\end{equation}
where $z_0\in S_X^\perp$ labels the seed's logical $X$ state. Let $\Lambda$ be the binary matrix whose rows are the indicator vectors of the gate supports $L'_j$. The layer reads the gate parities $\Lambda E^{-T}(z_0+h,0)+\Lambda E^{-T}(0,y)$. If Eq.~\eqref{eq:constraint} holds, the kernel word changes no gate parity. The complete phase then equals that of the all-plus construction, which proves Eq.~\eqref{eq:constraint}, including logical phases. Writing $\Lambda=(\Lambda_S\ \Lambda_K)$ for its columns on seed and kernel qubits, the condition reads $K_Z\subseteq\ker[\Lambda_K+(\Lambda_S+\Lambda_KM^T)N^T]$, which is $K_Z\subseteq\ker\Lambda_K$ at $N=0$.

The criterion is sufficient but not necessary. For a $Z$ stabilizer of $\mathcal Q$, let $\mathcal A$ be the set of rotation generators $X(L'_j)$ that anticommute with it, the gates whose parity it flips. Conjugation by the layer maps the stabilizer to itself times $\prod_{j\in \mathcal A}\pm iX(L'_j)$, with the sign of each rotation. The layer therefore preserves $\mathcal Q$ exactly when this product is an $X$ stabilizer of $\mathcal Q$, phase included, for every $Z$ stabilizer. It suffices to check the kernel checks $WZ(y)W^\dagger$, for which Eq.~\eqref{eq:constraint} is the case $\mathcal A=\emptyset$. For a seed $Z$ stabilizer, the all-plus assumption places the product in the all-plus stabilizer group. It also commutes with every kernel check, because that is the same condition as the kernel product commuting with the seed stabilizer: the two sets $\mathcal A$ meet in an even number of gates. Preservation also fixes the logical action. The logical operators of $\mathcal Q$ can be chosen among those of the all-plus code, and a stabilizer of the all-plus code that commutes with every stabilizer of $\mathcal Q$ is a stabilizer of $\mathcal Q$. For example, couple the Steane code without twist to one kernel qubit per seed qubit, impose one kernel $Z$ check on the four copies of a seed $X$ check, and apply $\sqrt X$ to these four copies and to the other three seed qubits. The check anticommutes with four of the seven rotations, their factors $iX$ multiply to the $X$ stabilizer on the four copies, and the layer implements the same logical gate as transversal $\sqrt X$ on the seed.

\subsection{Converse for a physical parity layer}\label{app:converse}
We prove the converse stated at the start of Sec.~\ref{sec:construction}: every CSS code with a physical parity layer is a coupled code. The proof rests on a structure lemma that holds for every CSS code and every split of its qubits, with no gate involved. It is Goursat's lemma~\cite{goursat1889} for subspaces of a direct sum, together with the observation that over $\F$ the isomorphism it produces lifts to a twist.
\begin{lemma}[Structure of a CSS code across a cut]\label{lem:goursat}
Let $\mathcal Q=\CSS(Q_X,Q_Z)$ be a positive CSS code, split its qubits into a set $A$ of $m$ qubits and a set $B$ of $r$ qubits, and regard $C_X=Q_X$ and $C_Z=Q_Z^\perp$ as subspaces of $\F^m\oplus\F^r$. With the projection $\pi(s,u)=s$ onto $A$, define
\begin{align}
 S_X&=\pi C_X,\qquad S_Z=(\pi C_Z)^\perp,\nonumber\\
 K_X&=\{u:(0,u)\in C_X\},\qquad K_Z=\{u:(0,u)\in C_Z\}^\perp.
 \label{eq:goursatdata}
\end{align}
Then $S_X\subseteq S_Z^\perp$ and $K_X\subseteq K_Z^\perp$, and there is a twist $N\in\F^{r\times m}$, with CNOTs from the qubits of $A$ onto those of $B$ as in Eq.~\eqref{eq:coupling}, such that
\begin{equation}
 V_N\,\mathcal Q=\CSS(S_X,S_Z)\otimes\CSS(K_X,K_Z).
 \label{eq:goursatsplit}
\end{equation}
In particular $k=k_{\mathcal S}+k_{\mathcal K}$. The twist may be changed by any map that sends $S_X$ into $K_X$ and $S_Z^\perp$ into $K_Z^\perp$, and $N=0$ works exactly when $C_X=S_X\oplus K_X$ and $C_Z=S_Z^\perp\oplus K_Z^\perp$.
\end{lemma}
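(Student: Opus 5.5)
The plan is to treat the $X$-type spaces only, since the $Z$ stabilizer space is determined by duality, and to build one linear map $N$ that splits both nested subspaces $C_X\subseteq C_Z$ at once. We have $C_X=Q_X\subseteq Q_Z^\perp=C_Z$ because the code is CSS. The inclusions in the statement are immediate. First, $\pi C_X\subseteq\pi C_Z$ gives $S_X\subseteq(\pi C_Z)=S_Z^\perp$. Second, $\{u:(0,u)\in C_X\}\subseteq\{u:(0,u)\in C_Z\}=K_Z^\perp$ gives $K_X\subseteq K_Z^\perp$; this uses double perp over $\F$ in finite dimension.

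For the Goursat step, note that a twist $V_N$ (CNOTs from $A$ onto $B$) maps the support $(s,u)$ of an $X$-type operator to $(s,u+Ns)$, as in Eq.~\eqref{eq:imagesapp}. For a subspace $C\subseteq\F^m\oplus\F^r$, define its projection $\pi C$ and its fiber $C_0=\{u:(0,u)\in C\}$. For each $s\in\pi C$, the set $\{u:(s,u)\in C\}$ is a single coset $\phi_C(s)+C_0$. The resulting map $\phi_C:\pi C\to\F^r/C_0$ is linear; this is Goursat's lemma specialized to a direct sum. Suppose $N$ is any linear map with $Ns\in\phi_C(s)+C_0$ for all $s\in\pi C$. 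Then every $(s,u)\in C$ satisfies $u+Ns\in C_0$, so $V_NC\subseteq\pi C\oplus C_0$. Equality follows from dimensions, since $\dim C=\dim\pi C+\dim C_0$ by rank--nullity of $\pi$. Over $\F$ such a lift always exists: choose a basis of $\pi C$, pick a representative of each coset, and extend linearly.

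The main obstacle is to choose one $N$ that lifts both $\phi_{C_X}$ on $S_X$ and $\phi_{C_Z}$ on $S_Z^\perp$. The two lifts are compatible on $S_X\subseteq S_Z^\perp$. Indeed, $C_X\subseteq C_Z$ implies $\phi_{C_X}(s)+K_X\subseteq\phi_{C_Z}(s)+K_Z^\perp$ for $s\in S_X$. Hence any lift of $\phi_{C_X}$ is automatically a lift of $\phi_{C_Z}$ on $S_X$. So I would proceed in three steps:
\begin{enumerate}
\item Pick a basis of $S_X$ and define $N$ on it by representatives of the $C_X$-cosets.
\item Extend that basis to a basis of $S_Z^\perp$, and define $N$ on the new vectors by representatives of the $C_Z$-cosets.
\item Extend $N$ arbitrarily to all of $\F^m$.
\end{enumerate}
With this $N$, the Goursat step gives
\begin{equation}
 V_NQ_X=S_X\oplus K_X,\qquad V_NQ_Z^\perp=S_Z^\perp\oplus K_Z^\perp.
\end{equation}
Since CNOT circuits preserve the symplectic form, taking perps gives $V_NQ_Z=S_Z\oplus K_Z$. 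CNOT conjugation keeps all signs positive, so the conjugated code is the positive product $\CSS(S_X,S_Z)\otimes\CSS(K_X,K_Z)$, which is Eq.~\eqref{eq:goursatsplit}. The count $k=k_{\mathcal S}+k_{\mathcal K}$ then follows because $V_N$ is unitary.

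For the freedom in the twist, replace $N$ by $N+\Delta$. The characterization of valid lifts shows that $N+\Delta$ works exactly when $\Delta s\in K_X$ for $s\in S_X$ and $\Delta s\in K_Z^\perp$ for $s\in S_Z^\perp$. In particular, $N=0$ works exactly when $0$ is a lift of both maps, that is, when $\phi_{C_X}$ and $\phi_{C_Z}$ vanish. By the dimension count this is equivalent to $C_X=S_X\oplus K_X$ and $C_Z=S_Z^\perp\oplus K_Z^\perp$.
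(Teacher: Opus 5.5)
Your proposal is correct and follows essentially the same route as the paper's proof. Both use Goursat's coset description of $C_X$ and $C_Z$. Both build the twist by lifting a basis of $S_X$ through $C_X$ and its extension to $S_Z^\perp$ through $C_Z$, and both close with the dimension count and the coset characterization of the allowed twists.

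Your departures are cosmetic. You obtain $V_NQ_Z=S_Z\oplus K_Z$ by symplectic duality, where the paper instead argues that a CSS code is fixed by its $X$ checks and its codeword strings. You also state explicitly that a lift for $C_X$ is automatically a lift for $C_Z$ on $S_X$, which the paper leaves implicit.
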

\begin{proof}
Goursat's lemma states that a subspace $C_X\subseteq\F^m\oplus\F^r$ is determined by its projections, intersections, and induced quotient isomorphism: $(s,u)\in C_X$ exactly when $s\in\pi C_X$ and $u$ lies in a coset of $K_X$ that depends linearly on $s$, and $\dim C_X=\dim\pi C_X+\dim K_X$. The same holds for $C_Z$, with $K_Z^\perp$ in place of $K_X$. Since $C_X\subseteq C_Z$, we have $\pi C_X\subseteq\pi C_Z$ and $K_X\subseteq K_Z^\perp$, which are the two inclusions. Choose a basis of $S_X$ and lift each vector to an element of $C_X$. Extend it to a basis of $S_Z^\perp$ and lift the new vectors to elements of $C_Z$. Let $N$ send each basis vector to the $B$ component of its lift, extended arbitrarily to $\F^m$. For $(s,u)\in C_X$, subtracting the lift of $s$ leaves an element $(0,u+Ns)$ of $C_X$, so $u+Ns\in K_X$, and likewise $u+Ns\in K_Z^\perp$ for $(s,u)\in C_Z$. The twist maps $X(s,u)$ to $X(s,u+Ns)$ and the basis string $(s,u)$ to $(s,u+Ns)$, so $V_NC_X=S_X\oplus K_X$ and $V_NC_Z=S_Z^\perp\oplus K_Z^\perp$. A CSS code is fixed by its $X$ checks and by the strings that occur in its codewords, which proves Eq.~\eqref{eq:goursatsplit}. The dimension formula for $C_X$ and $C_Z$ gives $k=\dim C_Z-\dim C_X=k_{\mathcal S}+k_{\mathcal K}$. Two twists give the same splitting exactly when their difference preserves the cosets above, and $N=0$ is such a twist exactly when $C_X$ and $C_Z$ are already direct sums.
\end{proof}
Now let $\mathcal Q$ carry a parity layer on sets $F_i$, each with a qubit that lies in no other set. Take that qubit as seed qubit $i$, let $A$ be the seed qubits and $B$ the remaining qubits, and let $U_M$ be the CNOTs of Eq.~\eqref{eq:coupling} from the kernel qubits of each $F_i$ onto seed qubit $i$. The code $U_M^{-1}\mathcal Q$ is again a positive CSS code, and Lemma~\ref{lem:goursat} gives $\mathcal Q=U_MV_N\bigl(\CSS(S_X,S_Z)\otimes\CSS(K_X,K_Z)\bigr)$. Undoing $U_M$ turns the parity of each $F_i$ into the bit of seed qubit $i$, so the layer becomes $D\otimes I$ with $D$ diagonal on the seed qubits. This operator commutes with $V_N$, whose CNOTs are controlled by seed qubits, and it preserves $V_N(\mathcal S\otimes\mathcal K)$ because the layer preserves $\mathcal Q$. Hence $D$ preserves $\mathcal S$. Its phase is constant on the cosets of $S_X$ in $S_Z^\perp$, and it is a logical seed gate. A non-Clifford physical logical image cannot arise from a Clifford seed image tensored with identity. Conjugating by $H^{\otimes n}$ exchanges $Q_X$ with $Q_Z$ and reverses every CNOT, so the same conclusion holds for a layer diagonal in the $X$ basis whose phase depends only on the $X$-basis parities of the sets, with the coupling CNOTs running from each seed qubit onto its kernel qubits.

The lemma uses no gate. The parity phase on the full physical Hilbert space enters only in identifying the layer as a seed gate, since undoing $U_M$ alone need not turn a layer into a product.

\section{Complete specification of the $\qcode{37}{1}{7}$ code}\label{app:37}
This appendix proves the properties of $\Qh$ stated in Sec.~\ref{sec:example}. Appendix~\ref{app:37seed} recalls the data that the proofs use, Appendix~\ref{app:37dist} proves $d=7$ and the optimal check weight $\Wt=8$, and Appendix~\ref{app:37gates} proves the action and the partition distance of each layer. The accompanying JSON file of Sec.~\ref{sec:families} lists the seed, the kernel, the coupling circuits and every generator, and Table~\ref{tab:schedule} lists every check. The proofs specialize the results of Appendix~\ref{app:theorems} to this code.

\subsection{Seed, kernel and coupling}\label{app:37seed}
The seed is the 15-qubit quantum Reed--Muller code~\cite{knill1996,bravyi2005}. We label its qubits $i=1,\ldots,15$ by the nonzero vectors of $\F^4$, the points of the projective space $\mathrm{PG}(3,2)$, identifying $i=i_1+2i_2+4i_3+8i_4$ with $(i_1,i_2,i_3,i_4)$. Three distinct points form a line when they sum to zero, so the lines are the 35 triples $\{i,j,i+j\}$. The $X$ stabilizers are $X(c_\mu)$ with $c_\mu=\{i:\mu\cdot i=1\}$ for $\mu\in\F^4$, so $S_X$ is the four-dimensional space of supports of linear functions. The $Z$ stabilizers are $Z(c)$ with $c\in S_Z=(S_X+\langle\mathbf1\rangle)^\perp$, where $\mathbf1$ is the set of all 15 qubits. The seven points with vanishing last coordinate form a Fano plane $\Pi=\{1,\ldots,7\}$, whose seven lines we call Fano lines, and the other eight points form a cube $O=\{8,\ldots,15\}$.

An operator $Z(z)$ commutes with the seed $X$ stabilizers exactly when the labels in $z$ sum to zero in $\F^4$. It is a stabilizer exactly when, in addition, $|z|$ is even. The minimum $Z$ logical is therefore a line, of weight three. The nontrivial $X$ coset consists of $\mathbf1$ and the complements of the sets $c_\mu$ with $\mu\ne0$, of weights 15 and seven, so $(d_X,d_Z)=(7,3)$. Every nonzero $g\in S_X$ has weight eight. The basis strings $a$ of the logical state $\ket{\bar\lambda}$, $\lambda\in\{0,1\}$, lie in $S_X$ for $\lambda=0$ and in $S_Z^\perp\setminus S_X$ for $\lambda=1$, so $\wt(a)\equiv7\lambda\pmod8$. Transversal $T$ multiplies $\ket a$ by $e^{i\pi\wt(a)/4}=e^{-i\pi\lambda/4}$ and therefore acts as $\bar T^\dagger$.

Physical qubits $1,\ldots,15$ are the seed qubits. Each seed qubit $i$ forms the set $F_i$ with two kernel qubits if $i\in\Pi$ and with one if $i\in O$,
\begin{equation}
 F_i=\begin{cases}
 \{i,14+2i,15+2i\},&1\le i\le7,\\
 \{i,i+22\},&8\le i\le15.
 \end{cases}\label{eq:37fibres}
\end{equation}
The other 22 qubits form the kernel. It holds GHZ states on the triples
\begin{equation}
 \{18,22,26\},\ \{19,24,28\},\ \{20,23,29\},\ \{21,25,27\},
 \label{eq:ghz}
\end{equation}
and $\ket+$ on its other ten qubits. Each GHZ triple holds one copy of each point of a Fano line not through point $1$. The kernel is thus $\CSS(K_Z^\perp,K_Z)$, with $K_Z$ spanned by the eight $ZZ$ checks of the GHZ triples. Each kernel qubit controls a CNOT onto the seed qubit of its set. These 22 CNOTs, of depth two, form the circuit $U_M$ of Sec.~\ref{sec:seedkernel} without a twist, so $WZ_iW^\dagger=Z(F_i)$, and they give the coupled code, \texttt{lift:ghz37} in Table~\ref{tab:ft-codes}.

A final circuit of 304 CNOTs brings the number of checks on each qubit, and hence the number of CNOTs that touch it in each extraction round, to at most six, while keeping both single-block layers, $d=7$, and the check weight eight. It is a product of 49 CNOT circuits of the following form. Each maps $X(v)\mapsto X(v+(h\cdot v)\,g)$ and $Z(v)\mapsto Z(v+(g\cdot v)\,h)$ for disjoint sets $g$ and $h$, where $h\cdot v=|h\cap v|\bmod2$. On $X$ supports it is the binary matrix $I+gh^{\mathsf T}$. Here $g$ has even overlap with every set $F_i$, and $h$ with each of the seven line triples of Appendix~\ref{app:37gates}. Each of them therefore fixes every $Z(F_i)$ and every $X(L)$ of a line triple. The two single-block layers are functions of these operators, so each of these circuits commutes with them on the full Hilbert space and keeps their logical action. The full circuit changes the stabilizer group, and the resulting code is $\Qh$. Conjugating the checks of the coupled code by the 304 CNOTs reproduces exactly the stabilizer spaces of the checks in Table~\ref{tab:schedule}, not only their dimensions, and all signs are positive. The 326 CNOTs define the code and do not constitute a verified fault-tolerant preparation protocol.

The seed supports of the four weight-eight $X$ checks form a basis of $S_X$, and every other $X$ check has empty seed support. Logical representatives are
\begin{equation}
 \bar X=X_{2,3,5,6,7,17,22},\qquad
 \bar Z=Z_{1,8,9,16,17,30,31},
 \label{eq:logicals}
\end{equation}
where $X_{a,b,\ldots}=X(\{a,b,\ldots\})$, and likewise for $Z$. Here $\bar Z$ is $Z$ on the sets $F_i$ of the line $\{1,8,9\}$, the image of the seed logical $Z_{1,8,9}$ by Eq.~\eqref{eq:zlogical}. The seed support of $\bar X$ is $\Pi$, the support of the seed logical $X(\Pi)$.

\subsection{Distance and optimal check weight}\label{app:37dist}
We first prove $d=7$ for the coupled code by minimizing Eq.~\eqref{eq:dzstate} of Appendix~\ref{app:statekernel} over the kernel $Z$ checks $y$. We then verify $d=7$ for $\Qh$ and prove $\Wt=8$. For a seed $Z$ logical $z$, the image $E^{-T}(z,0)$ has $Z$ on every qubit of the sets $F_i$ with $i\in z$. Each plane point of $z$ contributes its seed qubit, point $1$ adds the two qubits $16,17$ in $\ket+$, and each cube point adds its seed qubit and its copy in $\ket+$. The other copies lie in the GHZ triples, one triple for each Fano line not through point $1$. Adding an even $Z$ word $y$ on a triple reduces even support to zero and odd support to one qubit. Hence the minimum over $y$ in Eq.~\eqref{eq:dzstate} is the cost $\gamma(z)$ below, and $d_Z=\min_z\gamma(z)$ over the seed $Z$ logicals $z$, with
\begin{align}
 \gamma(z)={}&|z\cap\Pi|+2\mathbf1_{1\in z}+2|z\cap O|\nonumber\\
 &+\sum_{\substack{L\text{ a Fano line}\\1\notin L}}
       \bigl(|z\cap L|\bmod2\bigr),
 \label{eq:37cost}
\end{align}
where $\mathbf1_{1\in z}$ is one if $1\in z$ and zero otherwise.
The labels of $z$ sum to zero and the cube points are those with last coordinate one, so $|z\cap O|$ is even. The logical condition implies that $|z\cap\Pi|$ is odd. If $z$ contains cube points, they cost at least four. An odd plane subset costs at least three. A singleton at point one costs three directly, and any other singleton intersects two lines in the sum. Larger odd subsets already weigh at least three. Thus $\gamma(z)\ge7$ in this case.

If $z$ contains no cube points, it is an odd subset of $\Pi$ whose labels sum to zero, that is, one of the seven lines or the whole plane. A line not through point one costs seven. A line through point one costs nine, and the whole plane costs thirteen. The operator $\bar Z$ of Eq.~\eqref{eq:logicals} is a weight-seven $Z$ representative. This proves $d_Z=7$ for the coupled code. Equation~\eqref{eq:untwisted} gives $d_X=7$, with $X_{1\cdots7}$ as a witness.

The final CNOT circuit can change the weights of logical operators, so we determine the distance of $\Qh$ directly. In each Pauli sector we split every error of weight at most six into two halves of weight at most three and record their syndromes and logical parities. No two halves have equal syndrome and opposite logical parity, so no logical operator has weight at most six, and the two operators of Eq.~\eqref{eq:logicals} have weight seven. This proves $d_X=d_Z=7$.

Any full stabilizer generating set has an $X$ projection whose seed supports span $S_X$. At least one generator has nonzero seed support. Every nonzero vector of $S_X$ has weight eight. The sets $F_i$ are disjoint, so the seed support is no larger than the support, and that generator has physical weight at least eight. The final CNOT circuit leaves every seed support unchanged, since each $g$ has even overlap with every set $F_i$, so the argument applies to $\Qh$. This also allows mixed $X/Z$ generators. The checks of Table~\ref{tab:schedule} attain the bound, proving optimal $\Wt=8$.

For the all-plus kernel, $K_Z=0$, on the same sets $F_i$, a nonzero seed $Z$ check has at least four points, and a check that involves plane points weighs at least ten, because their number is even. The full $Z$ space needs such checks, and a weight-ten basis exists, so the all-plus code is $\code{37}{1}{7}{10}{3}$, listed as \texttt{uqlc:Fano37} in Table~\ref{tab:ft-codes}. This is the comparison of Sec.~\ref{sec:distance}, where the GHZ kernel lowers the optimal check weight from ten to eight.

\subsection{Gate generators and partition distances}\label{app:37gates}
The two single-block layers apply the parity rotations $T(L)$ and $\sqrt X(L)$ of Eq.~\eqref{eq:gates} on two partitions of the 37 qubits. On the partition into the sets $F_i$, the parity layer $\prod_{i=1}^{15}T(F_i)$ acts as $\bar T^\dagger$ by Eq.~\eqref{eq:paritytransfer} and the weight identity of Appendix~\ref{app:37seed}, and its square acts as $\bar S^\dagger$. On the line partition we apply $\sqrt X(L)$ on the three triples
\begin{equation}
 \{1,2,3\},\qquad \{16,4,5\},\qquad \{17,6,7\},
 \label{eq:rootlines}
\end{equation}
and on the four GHZ triples of Eq.~\eqref{eq:ghz}. Each of these line triples holds one copy of each point of a Fano line, and the other 16 qubits are idle singletons of the partition.

To see that the line layer implements $\sqrt{\bar X}^{\,\dagger}$, first take the all-plus kernel, in which every copy carries the $X$-basis bit $u_j$ of its point $j$. On a state with these bits, $\sqrt X(L)$ contributes the phase $e^{-i\pi/4}e^{i\pi x_L/2}$, where $x_L=\bigoplus_ju_j$ over the three points $j$ copied in $L$. Every point of $\Pi$ lies on three lines and every pair of points on one, so
\begin{equation}
 \sum_Lx_L\equiv3\bigoplus_{j\in\Pi}u_j\pmod4,
 \label{eq:fanoidentity}
\end{equation}
as the expansion of each parity as $\sum_ju_j-2\sum_{j<k}u_ju_k$ modulo four shows. Up to a global phase the layer therefore multiplies by $e^{-i\pi b/2}$, where $(-1)^b$ is the eigenvalue of $X(\Pi)$, a logical $\bar X$ of the coupled code. This is the action of $\sqrt{\bar X}^{\,\dagger}=e^{i\pi\bar X/4}$. The GHZ triples are chosen so that every GHZ $Z$ stabilizer has even overlap with every triple, and Eq.~\eqref{eq:constraint} then keeps this action for the GHZ kernel. The final CNOT circuit commutes with both layers, so $\Qh$ keeps both actions. Bitwise CNOT between corresponding qubits of two blocks implements $\overline{\CNOT}$, and its parts are these pairs of qubits.

Each faulty native gate is a fault on one part. Equation~\eqref{eq:locbound} gives $\Dpart\ge\lceil7/3\rceil=3$ for both single-block layers. The operator $\bar Z$ of Eq.~\eqref{eq:logicals} lies on three of the sets $F_i$, and the logical operator $X_{1,\ldots,7,16,17}$ lies on the first three triples of Eq.~\eqref{eq:rootlines}, so both layers have exactly $\Dpart=3$. Each part of bitwise CNOT holds one qubit of each block, and a nontrivial logical operator of two blocks has weight at least seven on one of them, so bitwise CNOT has $\Dpart=7$.

Three faulty native gates can apply an undetectable logical operator only in two kinds of patterns. In the parity layer these are the gates on the three sets $F_i$ of one of the 35 lines of $\mathrm{PG}(3,2)$, and in the line layer the three triples through one of the seven Fano points. A faulty triple applies one of 63 nonidentity Paulis and a faulty pair one of 15, and few of their products are logical. An exact count over all such patterns gives an undetectable logical operator with probability $5.3\times10^{-3}p^3$ in the parity layer and $1.6\times10^{-4}p^3$ in the line layer, when each native gate fails with probability $p$.

We checked $\Qh$ directly through an independent binary verifier. It confirmed both distances, all check ranks, the logical commutation relation, and both partition distances within one block. It also evaluated the phase of each layer on all $2^{18}=262144$ elements of each stabilizer group and of its logical coset: the parity layer gives phases $0$ and $7$ modulo $8$, and the line layer gives phases $0$ and $3$ modulo $4$, as $\bar T^\dagger$ and $\sqrt{\bar X}^{\,\dagger}$ require.

\section{Extraction circuits and noise model}\label{app:resources}
This appendix collects the inputs behind the logical error rates of Sec.~\ref{sec:example}. For $\Qh$ we give the extraction circuit and the proof of its circuit distance. We then define the noise model, the logical-error experiments and the exact simulation of the $\bar T^\dagger$ layer. Finally, we describe the post-selected preparation of the magic state quoted in Sec.~\ref{sec:conclusion}.

\subsection{Extraction circuit of the $\qcode{37}{1}{7}$ code}\label{app:extraction}
Because the stabilizer generators of $\Qh$ reach weight eight, bare-ancilla extraction would let a single ancilla fault spread to a high-weight data error, a so-called hook~\cite{dennis2002}. We measure each check with one ancilla and add flags~\cite{chamberland2018} only where they are needed. A flag is a qubit coupled to the ancilla by a pair of CNOTs that bracket part of the ancilla's CNOT sequence, so that a fault inside the bracket also flips the flag. The ancilla of a check together with its flags is a gadget, and only 14 of the 36 checks carry flags.

Table~\ref{tab:schedule} lists the round, and Fig.~\ref{fig:extraction} draws it. A $Z$ check prepares its ancilla $a$ in $\ket0$, applies $\CNOT(q\to a)$ for its data qubits $q$ in the listed order, and measures $a$ in $Z$. Each flag $f$ starts in $\ket+$, couples by $\CNOT(f\to a)$ at the two positions of its interval, and is measured in $X$. An $X$ check is the same circuit with the roles of $X$ and $Z$ exchanged. Its ancilla starts in $\ket+$ and applies $\CNOT(a\to q)$, its flags start in $\ket0$ and couple by $\CNOT(a\to f)$, and the bases of the readouts are exchanged. Measuring $X$ and $Z$ checks at the same time shortens the round and removes idle locations. An $X$ check and a $Z$ check that share qubits are then both measured correctly only if, on an even number of the shared qubits, the $X$ ancilla acts before the $Z$ ancilla. All 36 gadgets run in one schedule of 12 CNOT layers. Every ancilla and flag is prepared at the start of the round and measured at its end, so no qubit is reset or measured in the middle of a round. The round has $162$ data CNOTs and $2\times15$ flag CNOTs, 192 in all, on 88 qubits. The rows of Table~\ref{tab:schedule} are the 36 checks of $\Qh$, which we number $X1$--$X18$ and $Z1$--$Z18$ in the order of the table.

\begin{table*}[t]
\caption{The distance-seven extraction circuit of $\Qh$, with 15 flags, each used by one check. Data qubits are listed in CNOT order, and Wt is the weight of the check. A flag interval $(j,k)$ places one flag CNOT after the $j$-th and one after the $k$-th data CNOT, counting from one, with $j=0$ meaning before the first. Figure~\ref{fig:extraction} gives the layer of every CNOT.}\label{tab:schedule}
\centering\small\setlength{\tabcolsep}{4pt}
\begin{tabular}{lrl@{\hspace{14pt}}c@{\hspace{14pt}}lrl}\toprule
\multicolumn{3}{c}{$X$ checks} && \multicolumn{3}{c}{$Z$ checks}\\
\cmidrule(r){1-3}\cmidrule(l){5-7}
CNOT order & Wt & Flags && CNOT order & Wt & Flags\\\midrule
19, 18 & 2 &  && 3, 2 & 2 & \\
33, 11 & 2 &  && 16, 4 & 2 & \\
34, 12 & 2 &  && 6, 17 & 2 & \\
13, 35 & 2 &  && 23, 20 & 2 & \\
37, 15 & 2 &  && 20, 29 & 2 & \\
3, 21, 18, 2 & 4 &  && 9, 28, 24, 10 & 4 & (1,3)\\
22, 4, 1, 16 & 4 &  && 26, 5, 6, 28 & 4 & \\
17, 26, 6, 1 & 4 &  && 9, 32, 30, 14 & 4 & (1,3)\\
27, 26, 25, 5 & 4 &  && 27, 25, 31, 32 & 4 & (1,3)\\
7, 28, 5, 24 & 4 &  && 7, 25, 14, 23, 5, 8 & 6 & (1,5)\\
9, 31, 32, 10 & 4 &  && 27, 19, 5, 21, 7, 18 & 6 & \\
30, 8, 14, 36 & 4 &  && 7, 16, 25, 24, 27, 22 & 6 & (1,3)\\
7, 23, 21, 20, 22, 29 & 6 &  && 27, 22, 26, 21, 1, 2 & 6 & \\
9, 31, 25, 24, 36, 14 & 6 &  && 31, 11, 14, 8, 33, 10 & 6 & (1,5)\\
34, 15, 5, 6, 8, 33, 19, 17 & 8 & (1,7) && 26, 30, 1, 25, 36, 16 & 6 & (1,5)\\
3, 22, 12, 10, 2, 24, 33, 35 & 8 & (1,7) && 34, 9, 10, 12, 8, 36 & 6 & (1,5)\\
32, 7, 27, 18, 14, 15, 21, 11 & 8 & (1,6), (3,7) && 10, 13, 31, 36, 35, 30 & 6 & (1,5)\\
13, 9, 19, 26, 12, 21, 30, 28 & 8 & (1,7) && 32, 37, 31, 8, 15, 36 & 6 & (1,5)\\
\bottomrule\end{tabular}\end{table*}

We consider memory experiments of $R$ rounds, with fault locations after every CNOT, idle step, reset, and measurement. The circuit distance $d_{\rm circ}$ is the fewest faults that flip a logical observable without flipping any detector. For it we take the encoding and the final readout of all checks and of the logical to be ideal, and count the faults of the extraction rounds. A criterion on single gadgets does not by itself control $d_{\rm circ}$. Two unflagged hooks, each far from a logical operator, can combine with a hook from a third gadget and one data fault into a logical error. What does compose is the following condition. For a set $\mathcal E$ of ancilla faults in a gadget, let $n_{\rm f}(\mathcal E)$ be the number of flags raised and $w_{\rm red}(\mathcal E)$ the reduced weight of the resulting data error modulo the stabilizer group. If every gadget satisfies $|\mathcal E|+n_{\rm f}(\mathcal E)\ge w_{\rm red}(\mathcal E)$, a hook never costs fewer faults than the data error it is equivalent to, and therefore $d_{\rm circ}=d$. This condition is sufficient but costly in flags, so for $\Qh$ we instead compute the circuit distance of the whole round directly.

We prove $d_{\rm circ}\ge7$ by an exact enumeration in a conservative model. Each fault is described by the final syndrome, the logical parity, and the flags it raises, and the intermediate syndrome is discarded. Discarding it can only make distinct faults look alike, so the circuit distance in this model is a lower bound on the true one. We call this triple the signature of the fault. For $X$-type and $Z$-type faults separately, an exact enumeration of all sums of at most three signatures finds no two sums that differ only in logical parity. A set of at most six faults that flipped the logical without being detected would split into two sets of at most three faults with exactly such signatures. Hence no such set exists. Seven faults on data qubits suffice, so $d_{\rm circ}=d=7$. The result holds for any number of identical rounds between an ideal encoding and an ideal final readout of all checks and of the logical. The CNOT orders, flag positions, and layers were found by a constraint search. Its constraints are the order of each gadget's CNOTs, at most one CNOT per qubit and layer, and the ordering condition for every pair of overlapping $X$ and $Z$ checks. Finally, we propagate every single fault of the compiled round, at every CNOT, reset, measurement, and idle location, and check that its signature is among those used in the enumeration. A separate audit at every operation boundary checks that data errors do not spread to other data qubits or flags, and that a flag error changes only its own outcome. Noiseless simulation confirms that every detector is deterministic for one and three rounds.

Within this model the round is optimal in flags, CNOTs, CNOT layers, and qubits at once, for the checks of Table~\ref{tab:schedule}, one ancilla per check, and flags that are neither shared between checks nor reused. The weight-eight check $X17$ needs at least two flags, and thirteen other checks need at least one each, so at least 15 flags are required. The checks of Table~\ref{tab:schedule} have total weight 162, and each flag costs at least two CNOTs, so a round has at least $162+30=192$ CNOTs. The ancilla of $X17$ alone takes part in at least $8+4=12$ CNOTs, one per layer, so a round has at least 12 layers. The flag counts follow from a coloring argument. A fault on the $i$-th data CNOT of a check leaves the first $i-1$ or $i$ data qubits of its CNOT order as a data error, with the same flag outcomes. If such an error can be completed to an undetectable logical with at most $d-2$ further faults, its flag pattern must be nonzero. If the sum of two such errors can be completed with at most $d-3$ further faults, their flag patterns must differ. These conditions define a graph on the data CNOTs of the check, together with one vertex for the zero flag pattern, and $f$ flags can satisfy them only if the graph can be colored with $2^f$ colors. For every order of their data qubits, thirteen checks need at least one flag, and the eight-qubit check $X17$ gives a graph that is not bipartite, so it needs at least two. The other 22 checks need none. An independent enumeration of all orders confirms these bounds. 

\begin{figure*}[p]
\centering\includegraphics[width=\textwidth,height=0.84\textheight,keepaspectratio]{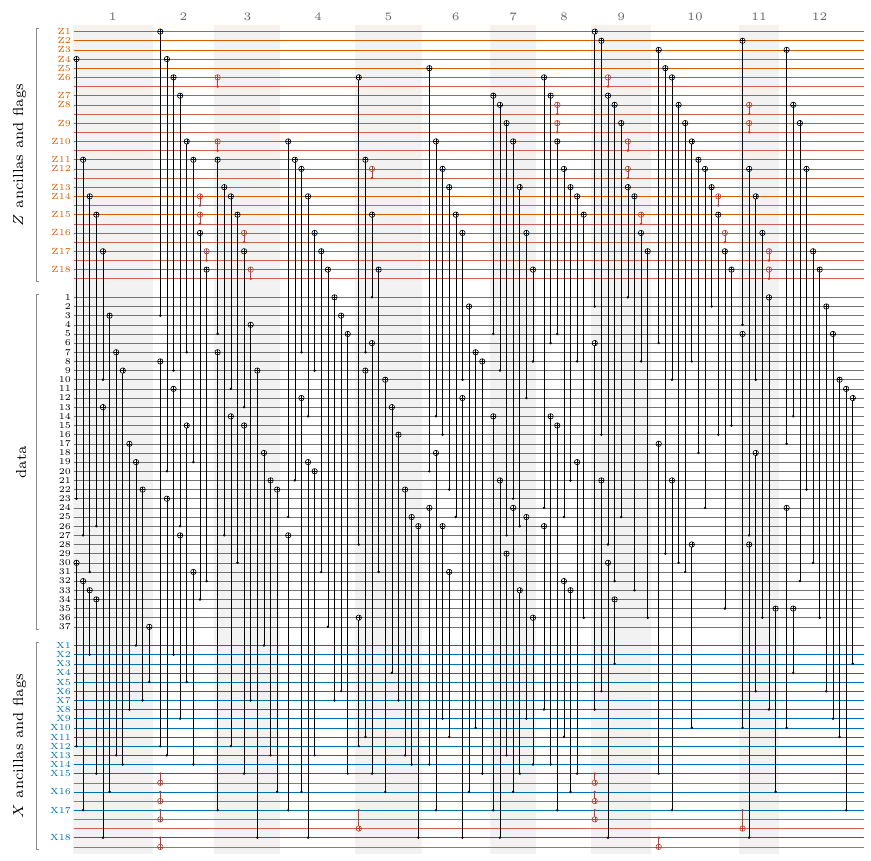}
\caption{The extraction round of $\Qh$ without flag reuse, with all 88 qubits (37 data, 36 syndrome ancillas, and 15 flags), all 192 CNOTs, and the 12 CNOT layers (numbered bands). Every ancilla and flag is prepared before the first layer and measured after the last one, so no flag is reused within a round. The data wires are in the middle, labeled from one, the $Z$-check ancillas (orange) with their flags (red) above, and the $X$-check ancillas (blue) with their flags below. Black CNOTs couple data to ancillas, and red CNOTs couple flags to ancillas. The round preserves the code distance, $d_{\rm circ}=7$, for any number of rounds between an ideal preparation and an ideal readout. The checks $Z1$--$Z18$ and $X1$--$X18$ are the rows of Table~\ref{tab:schedule}.}
\label{fig:extraction}
\end{figure*}

\subsection{Noise model and logical-error experiments}\label{app:noise}
Table~\ref{tab:noise} lists the noise model. All gate, reset, and measurement errors scale with the two-qubit gate error $p$, with the relative strengths of the neutral-atom model used throughout this work. A two-qubit fault chooses uniformly among the 15 nonidentity Paulis, and a faulty native triple among its 63. A reset or preparation fault prepares the opposite eigenstate, and a measurement fault flips the classical outcome. The simulated circuits contain no separate single-qubit gates, since each native gate is one location.

\begin{table}[t]
\caption{Noise model of the logical-error experiments. All probabilities are per operation. Idle errors follow Eq.~\eqref{eq:idle} with the durations and coherence times of Table~\ref{tab:coherence}.}
\label{tab:noise}\centering\small\setlength{\tabcolsep}{4pt}
\begin{tabular}{ll}\toprule
Location & Error\\\midrule
CNOT & two-qubit depolarizing, $p$\\
Native gate, $\ell\le3$ qubits & nonidentity Pauli, $p$\\
Ancilla and flag reset & wrong state, $p$\\
Ancilla and flag measurement & outcome flip, $5p/3$\\
Transversal data preparation & wrong state, $p$\\
Transversal data readout & outcome flip, $5p/3$\\
Idle qubit, time $t$ & Pauli channel of Eq.~\eqref{eq:idle}\\\bottomrule
\end{tabular}\end{table}

The hyperfine clock states that store the qubits relax with time $T_1$ and dephase with time $T_2$ under dynamical decoupling. Amplitude damping followed by pure dephasing, twirled over the Pauli group, is the Pauli channel that applies $X$, $Y$ and $Z$ with probabilities $p_x$, $p_y$ and $p_z$ to a qubit that waits for a time $t$,
\begin{equation}
 p_x=p_y=\frac{1-e^{-t/T_1}}{4},\qquad p_z=\frac{1-e^{-t/T_2}}{2}-\frac{1-e^{-t/T_1}}{4},
 \label{eq:idle}
\end{equation}
which is biased toward $Z$ when $T_2<T_1$. It acts on every qubit that waits, whether data, ancilla, or flag. A CNOT step consists of a Rydberg gate of about $0.2\,\mu$s and a parallel move of the ancillas to their next partners, and the move dominates. With constant acceleration $\alpha$, a move over a distance $\delta$ takes $2\sqrt{\delta/\alpha}$. For $\alpha=5500\,{\rm m\,s^{-2}}$ and a site spacing of $12\,\mu$m this is $93\,\mu$s for one site and $280\,\mu$s for nine sites~\cite{zhou2025}. We take $200\,\mu$s per step. While ancillas and flags are read out, the data qubits wait for $500\,\mu$s~\cite{zhou2025}. A native layer needs at most a one-site move, $100\,\mu$s, during which the qubits outside its supports wait. A transversal CNOT moves one block by its diameter, $400\,\mu$s, during which the data of both blocks wait.

\begin{table}[t]
\caption{Idle error from Eq.~\eqref{eq:idle} for $T_1=30$\,s and $T_2=10$\,s, a step of $200\,\mu$s, and a readout of $500\,\mu$s. A round counts the total idle error of an average data qubit, including the readout. It waits in about 8 of the 12 CNOT layers of $\Qh$.}
\label{tab:coherence}\centering\small\setlength{\tabcolsep}{3pt}
\begin{tabular}{lc}\toprule
Step, $(p_x,p_y,p_z)/10^{-5}$ & $(0.17,0.17,0.83)$\\
Readout & $2.9\times10^{-5}$\\
Round of $\Qh$ & $1.2\times10^{-4}$\\\bottomrule
\end{tabular}\end{table}

We take $T_1=30$\,s and $T_2=10$\,s, as assumed in Ref.~\cite{zhou2025}. These exceed the $T_1=4$\,s, including atom loss, and $T_2=1.5$\,s with XY16 decoupling measured in Ref.~\cite{bluvstein2022}, so the idle errors are optimistic. With the durations above, Eq.~\eqref{eq:idle} gives an idle error of $1.2\times10^{-5}$ per step and $2.9\times10^{-5}$ per readout, independent of $p$ (Table~\ref{tab:coherence}). We do not model atom loss as an erasure, heating or loss during moves, Rydberg crosstalk on spectators, leakage, or correlated readout errors.

The rates of memory and of the Clifford generators come from one experiment,
\begin{equation}
 \text{prep}\,-\,\text{SE round}\,-\,\bigl[\,\text{gate layer}\,-\,\text{SE round}\,\bigr]^{c}\,-\,\text{readout}.
 \label{eq:protocol}
\end{equation}
Here an SE round is one round of syndrome extraction, and a gate layer is one depth-one generating layer. The preparation puts every data qubit in $\ket0$, or every data qubit in $\ket+$, and the readout measures every data qubit in that basis. Memory is the case of an empty gate layer, and for each generator a cycle is one gate layer followed by one SE round. Since $\ket0^{\otimes n}$ is stabilized by every $Z$ check and by $\bar Z$, this transversal preparation needs no encoder. The first SE round projects it onto the code space, so its $Z$ detectors are deterministic, while its random $X$-check outcomes are the reference for the $X$ detectors of the second round. Preparing $\ket+^{\otimes n}$ exchanges the roles. A layer that does not keep the prepared basis, such as $\bar S^\dagger$ on $\ket{\bar+}$ or $\sqrt{\bar X}^{\,\dagger}$ on $\ket{\bar0}$, is run for an even number of cycles, so that its $c$-th power is a logical Pauli. For $\overline{\CNOT}$ both blocks are prepared in the same basis. The readout reconstructs the final checks of its basis and the logical observable. For each basis we run $c=2$, 4, and 8 cycles and fit the logical failure probability $P_L(c)$ after $c$ cycles as
\begin{equation}
 P_L(c)=\epsilon_0+c\,\epsilon,
 \label{eq:fit}
\end{equation}
by least squares weighted with the inverse Poisson variance of each failure count. Here $\epsilon_0$ collects the preparation, the first round, and the readout, and we report the sum of $\epsilon$ over the two bases as the failure per cycle.

Circuits are simulated with Stim~\cite{gidney2021stim}, which also converts each circuit into a detector error model with the probabilities of the noise model. Almost every fault mechanism of this model flips more than two detectors, because of the flags, the correlated faults of two- and three-qubit gates, and the interleaved round, so matching decoders do not apply directly. We decode every shot with a nontrivial syndrome with the Tesseract decoder~\cite{tesseract2025}, an A$^*$ search for the most likely set of fault mechanisms that explains the syndrome. Its search is pruned, so it is a heuristic minimum-weight decoder rather than a maximum-likelihood one.

The $\bar T^\dagger$ layer is simulated exactly with the Clifford-stabilizer method of Ref.~\cite{takada2026}. A Pauli error $P$ present at the layer becomes $PR(P)$, where $R(P)$ is the product of $S^\dagger(F_i)$ over the sets $F_i$ on which the $X$ part of $P$ has odd overlap, with $S(L)=T(L)^2$. Each shot samples every fault, propagates the faults to find the $X$ part of $P$ at each layer, and runs the stabilizer circuit without the layer, with the same faults, $R(P)$, and the controller's corrections inserted. The measurement record then has the distribution of the circuit with the layer. On an earlier extraction circuit of $\Qh$, an independent simulator that applies the $T$ gates themselves gives the same detector statistics, with $\chi^2$ per degree of freedom 1.03 over 138 detectors, against 5.4 when $R(P)$ is omitted. Statevector simulation of the 15-qubit seed with $\bar T$ and $\bar T^\dagger$ layers agrees within statistical error.

In the $X$ basis we prepare $\ket+^{\otimes n}$ and project the $Z$ checks noiselessly onto $+1$ before the first round, which fixes the Pauli frame that the controller would otherwise track. Before each layer the controller decodes the record so far with Tesseract and maps the decoded faults to the $X$ parities of the sets $F_i$. Right after the layer it applies $S(F_i)$ on the sets that it estimates odd. The residual $S(F_i)^{\pm1}$ on a misjudged set is projected by the next round onto the identity or onto $Z(F_i)$, except when several residuals combine into an operator with trivial syndrome, which the stabilizer simulation keeps exactly. After the $c$ cycles the $c$ layers are undone ideally in the controller's frame, and a noiseless round and readout of $\bar X$ end the experiment. We fit the failure linearly over $c=1,2,4,8$ and report the slope. In the $Z$ basis the layer is diagonal, and the readout measures $\bar Z$ directly after four noisy cycles. The final decoder adds a mechanism $Z(F_i)$ for every set and layer to the detector error model. It re-estimates the parity of every set at every layer from the complete record. On the sets where this estimate differs from the controller's recorded decision, it tries $Z(F_i)$ on every combination of up to six of them at no cost and keeps the most likely explanation.

A misjudged set adds $Z(F_i)$, a $Z$ error on two or three qubits, to a single $X$ fault, so two such sets can already cause a logical error. For two sets $F_i$ and $F_j$, the product $Z(F_i)Z(F_j)$ differs by the logical operator of Eq.~\eqref{eq:zlogical} from $Z(F_k)$ on the third point $k$ of their line, and the decoder prefers $Z(F_k)$ when it is lighter. The controller cannot see an $X$ fault after a qubit's last $Z$-check CNOT, and the last round before the layer has no later round to resolve its own faults. It therefore misjudges at least one set in about one layer in ten. The hindsight of the final decoder uses only information that a real device has. Without it, the decoder fails about ten times as often, and diagnostic runs show that the misjudged sets, rather than the noise of the native gates, dominate the remaining failure.

\subsection{Post-selected magic-state preparation}\label{app:magic}
This appendix gives the factory preparation of the magic state $\bar T^\dagger\ket{\bar+}$ of $\Qh$ quoted in Sec.~\ref{sec:conclusion}. It uses the noise model and the exact simulation of the $\bar T^\dagger$ layer of Appendix~\ref{app:noise}.

Magic-state cultivation~\cite{gidney2024cultivation} injects a $\ket T$ state into a small color code, checks it by repeatedly measuring a transversal Clifford operator of which $\ket T$ is an eigenstate, discards every attempt with a nontrivial detector, and grows the accepted state into a larger surface code. In $\Qh$ the magic state $\bar T^\dagger\ket{\bar+}$ is one layer away from the stabilizer state $\ket{\bar+}$. Post-selection therefore needs only the detectors of ordinary extraction rounds, and no logical check of the state. An accepted state reaches a data block of the same code through a transversal CNOT~\cite{huang2023homomorphic}.

We prepare $\ket+^{\otimes37}$ and run one round. Its random $Z$-check outcomes define a Pauli frame, and the layer applies $T^\dagger$ instead of $T$ on the sets $F_i$ of odd frame parity. We keep the state only if every detector is trivial, so no decoder is needed. The detectors include those of a noiseless round after the layer, where a real factory would run a noisy one. We score an accepted state by undoing the layer ideally and measuring $\bar X$.

The failure probability is a sum over sets of faults that pass the post-selection, each weighted by its probability and evaluated exactly with the Clifford-stabilizer method of Appendix~\ref{app:noise}, including the frame correction of the first round. The sum includes every such set of at most four faults and omits heavier sets, so the result is an estimate rather than a bound. The acceptance probability is sampled from $2\times10^5$ shots.

At $p=10^{-3}$ half of the attempts are accepted, and an accepted state fails with $5.8\times10^{-12}$. The failure scales as $p^3$, since three faulty native gates on the sets of a line of $\mathrm{PG}(3,2)$ can form an undetectable logical error (Appendix~\ref{app:37gates}). This value describes the state right after the layer, before any further round and before its transfer into a data block. A transfer into a different code, such as the surface code, would instead need a logical Bell measurement between the two codes. Cultivation reaches $2\times10^{-9}$ at $p=10^{-3}$~\cite{gidney2024cultivation}, but under uniform depolarizing noise and with an $S$-state proxy for the $T$ state, which overstates its fault distance~\cite{chan2026cultivation}. Our value instead simulates the non-Clifford layer exactly. The two values are therefore not a like-for-like comparison.

\end{document}

%% file: tab_ftcodes.tex
\begin{table*}[!t]
\caption{\label{tab:ft-codes}%
The \NFT{} depth-one universal codes whose generating set is fault tolerant in the strict sense: every native gate acts on at most $\ell\le t=\lfloor(d-1)/2\rfloor$ qubits, so a single faulty gate leaves a correctable error. $\ell$ is the largest number of qubits on which one gate of the depth-one universal generating set in the JSON file acts, Clifford and non-Clifford layers alike. Codes are ordered by $\ell$, then by distance, then by $k$, then by $n$; thick lines separate $\ell$ groups, each with a single $\ell$, and thin lines separate distances within a group. Each code is named by its database label, its parameters followed by its source and version, under which the accompanying database stores its generating circuits. Wt and Deg are the largest check weight and qubit degree of the optimized stabilizer generating set that the database stores for the code (\texttt{ldpc\_generators}). A dagger marks the two decomposable codes, each universal only through layers coupling its two blocks.}
\begin{ruledtabular}
\footnotesize
\setlength{\tabcolsep}{3pt}
\setlength{\aboverulesep}{0pt}\setlength{\belowrulesep}{0pt}
\renewcommand{\arraystretch}{1.25}
\begin{tabular}{clrrl@{\hspace{3em}}clrrl}
$\ell$ & Code & Wt & Deg & Seed & $\ell$ & Code & Wt & Deg & Seed \\
\hline
\raisebox{-0.625\normalbaselineskip}[0pt][0pt]{2} & \stepcounter{ftcodes}$\qcode{128}{35}{8}$\,\texttt{cons:hcube734} & 32 & 19 & $\qcode{64}{15}{4}$ &  & \stepcounter{ftcodes}$\qcode{128}{20}{8}$\,\texttt{uqlc:PCR128} & 32 & 12 & $\qcode{8}{3}{2}$ \\
\cmidrule[0.4pt]{7-10}
 & \stepcounter{ftcodes}$\qcode{256}{36}{8}$\,\texttt{uqlc:Tensor256}$^{\dagger}$ & 32 & 20 & $\qcode{64}{9}{4}$ &  & \stepcounter{ftcodes}$\qcode{59}{1}{9}$\,\texttt{lift:width3-59-w8} & 8 & 12 & $\qcode{15}{1}{3}$ \\
\cmidrule[1.1pt]{1-5}
 & \stepcounter{ftcodes}$\qcode{37}{1}{7}$\,\texttt{lift:ghz37} & 8 & 6 & $\qcode{15}{1}{3}$ &  & \stepcounter{ftcodes}$\qcode{59}{1}{9}$\,\texttt{lift:width3-59} & 10 & 10 & $\qcode{15}{1}{3}$ \\
 & \stepcounter{ftcodes}$\qcode{37}{1}{7}$\,\texttt{uqlc:Fano37} & 10 & 5 & $\qcode{15}{1}{3}$ &  & \stepcounter{ftcodes}$\qcode{180}{4}{9}$\,\texttt{lift:lowcheck180}$^{\dagger}$ & 12 & 11 & $\qcode{60}{4}{3}$ \\
\cmidrule[0.4pt]{7-10}
 & \stepcounter{ftcodes}$\qcode{37}{1}{7}$\,\texttt{lift:corr37} & 10 & 6 & $\qcode{15}{1}{3}$ &  & \stepcounter{ftcodes}$\qcode{320}{70}{10}$\,\texttt{uqlc:GFP320} & 39 & 26 & $\qcode{64}{15}{4}$ \\
\cmidrule[0.4pt]{7-10}
 & \stepcounter{ftcodes}$\qcode{38}{1}{7}$\,\texttt{lift:corr38} & 10 & 6 & $\qcode{15}{1}{3}$ & 3 & \stepcounter{ftcodes}$\qcode{383}{70}{11}$\,\texttt{uqlc:GFP383} & 53 & 27 & $\qcode{64}{15}{4}$ \\
\cmidrule[0.4pt]{7-10}
 & \stepcounter{ftcodes}$\qcode{39}{1}{7}$\,\texttt{lift:corr39} & 10 & 6 & $\qcode{15}{1}{3}$ &  & \stepcounter{ftcodes}$\qcode{516}{70}{12}$\,\texttt{uqlc:GFP516} & 67 & 28 & $\qcode{64}{15}{4}$ \\
\cmidrule[0.4pt]{7-10}
 & \stepcounter{ftcodes}$\qcode{40}{1}{7}$\,\texttt{lift:corr40} & 10 & 6 & $\qcode{15}{1}{3}$ &  & \stepcounter{ftcodes}$\qcode{705}{70}{13}$\,\texttt{uqlc:GFP705} & 88 & 28 & $\qcode{64}{15}{4}$ \\
\cmidrule[0.4pt]{7-10}
 & \stepcounter{ftcodes}$\qcode{41}{1}{7}$\,\texttt{uqlc:Fano41} & 11 & 5 & $\qcode{15}{1}{3}$ &  & \stepcounter{ftcodes}$\qcode{754}{70}{14}$\,\texttt{uqlc:GFP754} & 96 & 28 & $\qcode{64}{15}{4}$ \\
\cmidrule[0.4pt]{7-10}
 & \stepcounter{ftcodes}$\qcode{41}{1}{7}$\,\texttt{lift:corr41} & 11 & 6 & $\qcode{15}{1}{3}$ &  & \stepcounter{ftcodes}$\qcode{761}{70}{15}$\,\texttt{uqlc:GFP761} & 96 & 29 & $\qcode{64}{15}{4}$ \\
\cmidrule[0.4pt]{7-10}
 & \stepcounter{ftcodes}$\qcode{42}{1}{7}$\,\texttt{lift:corr42} & 11 & 6 & $\qcode{15}{1}{3}$ &  & \stepcounter{ftcodes}$\qcode{768}{70}{16}$\,\texttt{uqlc:GFP768} & 96 & 29 & $\qcode{64}{15}{4}$ \\
\cmidrule[1.1pt]{6-10}
 & \stepcounter{ftcodes}$\qcode{43}{1}{7}$\,\texttt{lift:corr43} & 11 & 6 & $\qcode{15}{1}{3}$ &  & \stepcounter{ftcodes}$\qcode{59}{1}{9}$\,\texttt{lift:twisted59} & 10 & 11 & $\qcode{15}{1}{3}$ \\
\cmidrule[0.4pt]{7-10}
\raisebox{-0.625\normalbaselineskip}[0pt][0pt]{3} & \stepcounter{ftcodes}$\qcode{44}{1}{7}$\,\texttt{lift:corr44} & 12 & 6 & $\qcode{15}{1}{3}$ & 4 & \stepcounter{ftcodes}$\qcode{256}{70}{16}$\,\texttt{cons:qrm8} & 32 & 25 & $\qcode{64}{15}{4}$ \\
 & \stepcounter{ftcodes}$\qcode{45}{1}{7}$\,\texttt{lift:corr45} & 12 & 6 & $\qcode{15}{1}{3}$ &  & \stepcounter{ftcodes}$\qcode{1024}{210}{16}$\,\texttt{cons:hcube1046} & 128 & 44 & $\qcode{512}{84}{8}$ \\
\cmidrule[1.1pt]{6-10}
 & \stepcounter{ftcodes}$\qcode{45}{1}{7}$\,\texttt{cons:xlock45} & 12 & 8 & $\qcode{15}{1}{3}$ & \raisebox{-0.625\normalbaselineskip}[0pt][0pt]{5} & \stepcounter{ftcodes}$\qcode{139}{1}{13}$\,\texttt{lift:affine139} & 18 & 6 & $\qcode{31}{1}{3}$ \\
\cmidrule[0.4pt]{7-10}
 & \stepcounter{ftcodes}$\qcode{63}{2}{7}$\,\texttt{uqlc:SH-RM63} & 10 & 7 & $\qcode{28}{2}{3}$ &  & \stepcounter{ftcodes}$\qcode{155}{1}{15}$\,\texttt{lift:affine155} & 20 & 6 & $\qcode{31}{1}{3}$ \\
\cmidrule[1.1pt]{6-10}
 & \stepcounter{ftcodes}$\qcode{64}{2}{7}$\,\texttt{uqlc:SH-RM64} & 10 & 7 & $\qcode{28}{2}{3}$ &  & \stepcounter{ftcodes}$\qcode{149}{1}{13}$\,\texttt{lift:doubled149} & 24 & 7 & $\qcode{49}{1}{5}$ \\
\cmidrule[0.4pt]{7-10}
 & \stepcounter{ftcodes}$\qcode{65}{2}{7}$\,\texttt{uqlc:SH-RM65} & 10 & 7 & $\qcode{28}{2}{3}$ &  & \stepcounter{ftcodes}$\qcode{186}{1}{15}$\,\texttt{lift:qrm186} & 24 & 6 & $\qcode{31}{1}{3}$ \\
\cmidrule[0.4pt]{7-10}
 & \stepcounter{ftcodes}$\qcode{66}{2}{7}$\,\texttt{uqlc:SH-RM66} & 10 & 6 & $\qcode{28}{2}{3}$ & \raisebox{-0.625\normalbaselineskip}[0pt][0pt]{6} & \stepcounter{ftcodes}$\qcode{181}{1}{17}$\,\texttt{lift:doubled181} & 24 & 7 & $\qcode{49}{1}{5}$ \\
\cmidrule[0.4pt]{7-10}
 & \stepcounter{ftcodes}$\qcode{67}{2}{7}$\,\texttt{uqlc:SH-RM67} & 11 & 6 & $\qcode{28}{2}{3}$ &  & \stepcounter{ftcodes}$\qcode{378}{1}{18}$\,\texttt{lift:proj378} & 32 & 17 & $\qcode{127}{1}{7}$ \\
\cmidrule[0.4pt]{7-10}
 & \stepcounter{ftcodes}$\qcode{68}{2}{7}$\,\texttt{uqlc:SH-RM68} & 12 & 6 & $\qcode{28}{2}{3}$ &  & \stepcounter{ftcodes}$\qcode{474}{1}{24}$\,\texttt{lift:proj474} & 32 & 16 & $\qcode{127}{1}{7}$ \\
\cmidrule[0.4pt]{7-10}
 & \stepcounter{ftcodes}$\qcode{69}{2}{7}$\,\texttt{uqlc:SH-RM69} & 13 & 6 & $\qcode{28}{2}{3}$ &  & \stepcounter{ftcodes}$\qcode{570}{1}{30}$\,\texttt{lift:proj570} & 36 & 15 & $\qcode{127}{1}{7}$ \\
\cmidrule[1.1pt]{6-10}
 & \stepcounter{ftcodes}$\qcode{70}{2}{7}$\,\texttt{uqlc:SH-RM70} & 14 & 6 & $\qcode{28}{2}{3}$ & 9 & \stepcounter{ftcodes}$\qcode{1369}{1}{49}$\,\texttt{lift:ghz37sq} & 56 & 15 & $\qcode{225}{1}{9}$ \\
 & \stepcounter{ftcodes}$\qcode{121}{20}{7}$\,\texttt{uqlc:PCR121} & 30 & 12 & $\qcode{8}{3}{2}$ &  &  &  &  &  \\
\end{tabular}
\end{ruledtabular}
\end{table*}

%% file: tab_pscodes.tex
\begin{table*}[!t]
\caption{\label{tab:ps-codes}%
The \NPS{} depth-one universal codes that are fault tolerant only with post-selection: every native gate acts on at most $\ell\le d-1$ qubits, so a single faulty gate is detected, but $\ell>t=\lfloor(d-1)/2\rfloor$, so it is not always corrected. Table~\ref{tab:ft-codes} lists the codes with $\ell\le t$. $\ell$ is the largest number of qubits on which one gate of the depth-one universal generating set in the JSON file acts, Clifford and non-Clifford layers alike. Codes are ordered by $\ell$, then by distance, then by $k$, then by $n$; thick lines separate $\ell$ groups, each with a single $\ell$, and thin lines separate distances within a group. Each code is named by its database label, its parameters followed by its source and version, under which the accompanying database stores its generating circuits. Wt and Deg are the largest check weight and qubit degree of the optimized stabilizer generating set that the database stores for the code (\texttt{ldpc\_generators}).}
\begin{ruledtabular}
\footnotesize
\setlength{\tabcolsep}{3pt}
\setlength{\aboverulesep}{0pt}\setlength{\belowrulesep}{0pt}
\renewcommand{\arraystretch}{1.25}
\begin{tabular}{clrrl@{\hspace{3em}}clrrl}
$\ell$ & Code & Wt & Deg & Seed & $\ell$ & Code & Wt & Deg & Seed \\
\hline
 & \stepcounter{pscodes}$\qcode{15}{4}{3}$\,\texttt{cons:blk45k12} & 8 & 8 & $\qcode{8}{3}{2}$ & 2 & \stepcounter{pscodes}$\qcode{32}{10}{4}$\,\texttt{cons:hcube523} & 16 & 7 & $\qcode{16}{4}{2}$ \\
\cmidrule[1.1pt]{6-10}
 & \stepcounter{pscodes}$\qcode{18}{4}{3}$\,\texttt{xz:ad496c3d} & 8 & 5 & $\qcode{8}{3}{2}$ &  & \stepcounter{pscodes}$\qcode{12}{2}{4}$\,\texttt{eczoo:9a9707b2} & 6 & 6 & $\qcode{8}{2}{2}$ \\
 & \stepcounter{pscodes}$\qcode{15}{6}{3}$\,\texttt{cons:blk45k18} & 8 & 7 & $\qcode{8}{3}{2}$ &  & \stepcounter{pscodes}$\qcode{18}{2}{4}$\,\texttt{xz:2444970a} & 6 & 5 & $\qcode{8}{2}{2}$ \\
 & \stepcounter{pscodes}$\qcode{18}{6}{3}$\,\texttt{xz:7e3340e4} & 8 & 5 & $\qcode{8}{3}{2}$ &  & \stepcounter{pscodes}$\qcode{20}{2}{4}$\,\texttt{xz:a173aa5b} & 6 & 5 & $\qcode{8}{2}{2}$ \\
 & \stepcounter{pscodes}$\qcode{18}{6}{3}$\,\texttt{xz:abbfb50a} & 8 & 5 & $\qcode{8}{3}{2}$ &  & \stepcounter{pscodes}$\qcode{20}{2}{4}$\,\texttt{xz:99344804} & 7 & 5 & $\qcode{8}{2}{2}$ \\
 & \stepcounter{pscodes}$\qcode{18}{6}{3}$\,\texttt{xz:d930205a} & 8 & 5 & $\qcode{8}{3}{2}$ &  & \stepcounter{pscodes}$\qcode{19}{3}{4}$\,\texttt{xz:da778af6} & 7 & 5 & $\qcode{8}{2}{2}$ \\
 & \stepcounter{pscodes}$\qcode{18}{6}{3}$\,\texttt{xz:e8af9c51} & 8 & 5 & $\qcode{8}{3}{2}$ &  & \stepcounter{pscodes}$\qcode{24}{3}{4}$\,\texttt{lift:cube3-30-30} & 8 & 5 & $\qcode{8}{3}{2}$ \\
 & \stepcounter{pscodes}$\qcode{20}{6}{3}$\,\texttt{xz:49b7a4ce} & 8 & 5 & $\qcode{8}{3}{2}$ &  & \stepcounter{pscodes}$\qcode{24}{3}{4}$\,\texttt{lift:cube3-27-27} & 8 & 6 & $\qcode{8}{3}{2}$ \\
 & \stepcounter{pscodes}$\qcode{20}{6}{3}$\,\texttt{xz:aacac550} & 8 & 5 & $\qcode{8}{3}{2}$ &  & \stepcounter{pscodes}$\qcode{24}{3}{4}$\,\texttt{lift:cube3-34-34} & 12 & 5 & $\qcode{8}{3}{2}$ \\
 & \stepcounter{pscodes}$\qcode{20}{6}{3}$\,\texttt{xz:fb77d694} & 8 & 5 & $\qcode{8}{3}{2}$ &  & \stepcounter{pscodes}$\qcode{24}{3}{4}$\,\texttt{lift:cube3-37-37} & 12 & 5 & $\qcode{8}{3}{2}$ \\
 & \stepcounter{pscodes}$\qcode{20}{6}{3}$\,\texttt{xz:ffd5fd9f} & 8 & 5 & $\qcode{8}{3}{2}$ & \raisebox{-0.625\normalbaselineskip}[0pt][0pt]{3} & \stepcounter{pscodes}$\qcode{24}{3}{4}$\,\texttt{lift:cube3-33-33} & 12 & 6 & $\qcode{8}{3}{2}$ \\
\cmidrule[0.4pt]{2-5}
 & \stepcounter{pscodes}$\qcode{16}{3}{4}$\,\texttt{lift:cube2even} & 8 & 5 & $\qcode{8}{3}{2}$ &  & \stepcounter{pscodes}$\qcode{20}{4}{4}$\,\texttt{xz:8ebf9cd8} & 8 & 5 & $\qcode{8}{2}{2}$ \\
\cmidrule[0.4pt]{7-10}
 & \stepcounter{pscodes}$\qcode{16}{3}{4}$\,\texttt{lift:cube2max} & 8 & 5 & $\qcode{8}{3}{2}$ &  & \stepcounter{pscodes}$\qcode{29}{1}{5}$\,\texttt{uqlc:Fano29} & 8 & 5 & $\qcode{15}{1}{3}$ \\
 & \stepcounter{pscodes}$\qcode{16}{3}{4}$\,\texttt{lift:cube2rm} & 8 & 5 & $\qcode{8}{3}{2}$ &  & \stepcounter{pscodes}$\qcode{37}{1}{5}$\,\texttt{lift:qrm37} & 10 & 5 & $\qcode{15}{1}{3}$ \\
 & \stepcounter{pscodes}$\qcode{24}{3}{4}$\,\texttt{lift:cube3-15-15} & 10 & 5 & $\qcode{8}{3}{2}$ &  & \stepcounter{pscodes}$\qcode{29}{2}{5}$\,\texttt{cons:sq29} & 8 & 8 & $\qcode{15}{1}{3}$ \\
\raisebox{-0.625\normalbaselineskip}[0pt][0pt]{2} & \stepcounter{pscodes}$\qcode{16}{4}{4}$\,\texttt{xz:a08427d8} & 8 & 4 & $\qcode{8}{3}{2}$ &  & \stepcounter{pscodes}$\qcode{24}{3}{5}$\,\texttt{cons:tri24} & 8 & 6 & $\qcode{8}{3}{2}$ \\
\cmidrule[0.4pt]{7-10}
 & \stepcounter{pscodes}$\qcode{18}{4}{4}$\,\texttt{xz:262f9424} & 8 & 5 & $\qcode{8}{3}{2}$ &  & \stepcounter{pscodes}$\qcode{30}{1}{6}$\,\texttt{cons:xlock30} & 8 & 5 & $\qcode{15}{1}{3}$ \\
 & \stepcounter{pscodes}$\qcode{20}{4}{4}$\,\texttt{xz:436a271d} & 8 & 6 & $\qcode{8}{3}{2}$ &  & \stepcounter{pscodes}$\qcode{36}{1}{6}$\,\texttt{uqlc:Fano36} & 9 & 8 & $\qcode{15}{1}{3}$ \\
 & \stepcounter{pscodes}$\qcode{20}{4}{4}$\,\texttt{xz:fcd2f54b} & 8 & 6 & $\qcode{8}{3}{2}$ &  & \stepcounter{pscodes}$\qcode{38}{1}{6}$\,\texttt{lift:qrm38} & 10 & 5 & $\qcode{15}{1}{3}$ \\
 & \stepcounter{pscodes}$\qcode{19}{5}{4}$\,\texttt{xz:55b6311e} & 8 & 6 & $\qcode{8}{3}{2}$ &  & \stepcounter{pscodes}$\qcode{30}{3}{6}$\,\texttt{cons:glue15} & 8 & 7 & $\qcode{15}{1}{3}$ \\
 & \stepcounter{pscodes}$\qcode{16}{6}{4}$\,\texttt{xz:3c180907} & 8 & 5 & $\qcode{8}{3}{2}$ &  & \stepcounter{pscodes}$\qcode{72}{20}{6}$\,\texttt{uqlc:PC-QRM72} & 16 & 10 & $\qcode{8}{3}{2}$ \\
\cmidrule[1.1pt]{6-10}
 & \stepcounter{pscodes}$\qcode{16}{6}{4}$\,\texttt{copycup:66fe2995} & 8 & 8 & $\qcode{8}{3}{2}$ &  & \stepcounter{pscodes}$\qcode{30}{2}{6}$\,\texttt{cons:q0} & 8 & 8 & $\qcode{15}{1}{3}$ \\
\cmidrule[0.4pt]{7-10}
 & \stepcounter{pscodes}$\qcode{16}{6}{4}$\,\texttt{ml:e1c9cb94} & 8 & 8 & $\qcode{8}{3}{2}$ & \raisebox{-0.625\normalbaselineskip}[0pt][0pt]{4} & \stepcounter{pscodes}$\qcode{60}{1}{7}$\,\texttt{lift:qrm60} & 16 & 5 & $\qcode{15}{1}{3}$ \\
\cmidrule[0.4pt]{7-10}
 & \stepcounter{pscodes}$\qcode{20}{6}{4}$\,\texttt{xz:4b34cfd4} & 8 & 5 & $\qcode{8}{3}{2}$ &  & \stepcounter{pscodes}$\qcode{64}{20}{8}$\,\texttt{cons:qrm6} & 16 & 12 & $\qcode{24}{8}{4}$ \\
 & \stepcounter{pscodes}$\qcode{20}{6}{4}$\,\texttt{xz:a1c422e0} & 8 & 5 & $\qcode{8}{3}{2}$ &  & \stepcounter{pscodes}$\qcode{256}{56}{8}$\,\texttt{cons:hcube835} & 64 & 21 & $\qcode{128}{21}{4}$ \\
\cmidrule[1.1pt]{6-10}
 & \stepcounter{pscodes}$\qcode{20}{6}{4}$\,\texttt{xz:f3cea31d} & 8 & 5 & $\qcode{8}{3}{2}$ &  & \stepcounter{pscodes}$\qcode{117}{1}{9}$\,\texttt{lift:doubled117} & 24 & 7 & $\qcode{49}{1}{5}$ \\
 & \stepcounter{pscodes}$\qcode{20}{6}{4}$\,\texttt{xz:58c5eac1} & 10 & 5 & $\qcode{8}{3}{2}$ & \raisebox{-0.625\normalbaselineskip}[0pt][0pt]{6} & \stepcounter{pscodes}$\qcode{90}{2}{9}$\,\texttt{lift:lowcheck90} & 12 & 11 & $\qcode{30}{2}{3}$ \\
 & \stepcounter{pscodes}$\qcode{24}{6}{4}$\,\texttt{lift:cube3-14-27} & 8 & 6 & $\qcode{8}{3}{2}$ &  & \stepcounter{pscodes}$\qcode{90}{2}{9}$\,\texttt{lift:width3-90-k2} & 12 & 12 & $\qcode{30}{2}{3}$ \\
\cmidrule[0.4pt]{7-10}
 & \stepcounter{pscodes}$\qcode{24}{6}{4}$\,\texttt{lift:cube3-30-35} & 12 & 5 & $\qcode{8}{3}{2}$ &  & \stepcounter{pscodes}$\qcode{282}{1}{12}$\,\texttt{lift:proj282} & 32 & 22 & $\qcode{127}{1}{7}$ \\
\cmidrule[1.1pt]{6-10}
 & \stepcounter{pscodes}$\qcode{24}{6}{4}$\,\texttt{lift:cube3-27-33} & 12 & 6 & $\qcode{8}{3}{2}$ & \raisebox{-0.625\normalbaselineskip}[0pt][0pt]{10} & \stepcounter{pscodes}$\qcode{251}{1}{11}$\,\texttt{lift:incidence251} & 32 & 22 & $\qcode{127}{1}{7}$ \\
\cmidrule[0.4pt]{7-10}
 & \stepcounter{pscodes}$\qcode{24}{8}{4}$\,\texttt{ml:78f97bd9} & 12 & 6 & $\qcode{8}{3}{2}$ &  & \stepcounter{pscodes}$\qcode{272}{6}{20}$\,\texttt{lift:tesseract17} & 40 & 10 & $\qcode{8}{3}{2}$ \\
 & \stepcounter{pscodes}$\qcode{24}{9}{4}$\,\texttt{lift:cube3-14-33} & 12 & 6 & $\qcode{8}{3}{2}$ &  &  &  &  &  \\
\end{tabular}
\end{ruledtabular}
\end{table*}